\documentclass{article}

\usepackage[a4paper,margin=1in]{geometry}
\usepackage[english]{babel}
\usepackage{amsmath, amssymb, amsthm, mathtools,mathrsfs}
\usepackage{xcolor}
\usepackage[colorlinks]{hyperref}
\usepackage{doi}
\definecolor{darkred}{rgb}{0.5,0,0}
\definecolor{darkblue}{rgb}{0,0,0.5}
\definecolor{darkgreen}{rgb}{0,0.5,0}
\hypersetup{colorlinks
  ,linkcolor=darkred
  ,filecolor=darkgreen 
  ,urlcolor=darkred
  ,citecolor=darkblue,
  linktocpage}
\usepackage{cleveref}
\usepackage{centernot}
\usepackage{microtype}
\usepackage[inline]{enumitem}
\usepackage{doi}
\usepackage{dsfont}
\usepackage[nocompress]{cite}

\setlist[itemize]{label=-}

\theoremstyle{definition}
\newtheorem{definition}{Definition}[section]
\newtheorem{example}[definition]{Example}
\newtheorem{remark}[definition]{Remark}

\theoremstyle{plain}
\newtheorem{lemma}[definition]{Lemma}
\newtheorem{corollary}[definition]{Corollary}
\newtheorem{theorem}[definition]{Theorem}

\DeclareMathOperator{\conc}{conc}
\newcommand{\Problim}{\mathds{P}_{\lim}}
\newcommand{\Problimsup}{\overline{\mathds{P}}_{\lim}}
\newcommand{\sthat}{\mathbin |}
\DeclareMathOperator{\infix}{infix}
\newcommand{\cA}{\mathcal{A}}
\newcommand{\cB}{\mathcal{B}}
\newcommand{\abs}[1]{{\vert #1 \vert}}

\newcommand{\qtext}[1]{\quad\text{#1}\quad}
\newcommand{\qqtext}[1]{\qquad\text{#1}\qquad}
\newcommand{\N}{\mathbb{N}}%
\newcommand{\Prob}{\mathbb{P}}%

\renewcommand{\phi}{\varphi}

\newcommand{\emptyword}{\varepsilon}

\begin{document}

\title{Deciding Sparseness of Regular Languages of Finite Trees and Infinite Words}

\author{Kord Eickmeyer and Georg Schindling \\ Technische Universität Darmstadt}

\maketitle

\begin{abstract}
  We study the notion of sparseness for regular languages over finite trees and infinite words. 
  A language of trees is called sparse if the relative number of $n$-node trees in the language tends to zero, and
  a language of infinite words is called sparse if it has measure zero in the Bernoulli probability space.
  We show that sparseness is decidable for regular tree languages and for regular languages of infinite words. 
  For trees, we provide characterisations in terms of forbidden subtrees and tree automata, leading to a linear time decision procedure. 
  For infinite words, we present a characterisation via infix completeness and give a novel proof of decidability. 
  Moreover, in the non-sparse case, our algorithm computes a measurable subset of accepted words that can serve as counterexamples in almost-sure model checking.
  Our findings have applications to automata based model checking in formal verifications and XML schemas, among others.
\end{abstract}

    \section{Introduction}

We call a language of finite objects \emph{sparse} if the relative
number of $n$-element objects in the language tends to zero: For a
language $L \subseteq \Sigma^*$ of words we set
\(
  \Prob_n(L) \coloneqq \frac{\abs{L \cap \Sigma^n}}{\abs{\Sigma^n}}
\)
and call $L$ sparse if $\lim_{n\to\infty}\Prob_n(L) = 0$. Likewise, if
$L_T$ is a language of $\Sigma$-labelled trees and $\cB_n^\Sigma$ is the
set of all such trees with $n$ nodes, we set
\(
  \Prob_n(L_T) \coloneqq \frac{\abs{L_T \cap \cB_n^\Sigma}}{\abs{\cB_n^\Sigma}}
\)
and call $L_T$ sparse if $\lim_{n \to \infty}\Prob_n(L_T) = 0$. For a language
$L_\omega \subseteq \Sigma^\omega$ of infinite words we turn $\Sigma^\omega$
into a Bernoulli probability space and call $L_\omega$ sparse if it has
measure $0$. 

Regular languages play an important rôle both in applied and in
theoretical computer science. Apart from appearing in practical
applications such as pattern matching, one reason for their importance
from a theoretical perspective is that they allow for several
seemingly unrelated characterisations, in particular by finite
monoids, finite automata, and as being definable in monadic
second-order logic (MSO) by the Büchi-Elgot-Trakhtenbrot theorem~\cite{Buechi1960,Elgot1961,Trakhtenbrot1961}. 
Many problems that are undecidable or at
least computationally hard for general classes of languages are
computable for regular languages. In particular the emptiness problem
and, more generally, the subset relation between regular languages is
efficiently decidable.

Naturally, the techniques used in dealing with regular languages have
been adapted to other scenarios. Two of the most important and
successful generalisations have been to languages of finite trees and
languages of infinite words, also called $\omega$-languages. In both
these settings, the correspondence between regular languages, automata,
and definability in MSO still holds (see~\cite{Doner1970, Thatcher1968} and~\cite{Buechi1990}), yielding similar algorithmic
properties as for regular languages of finite words.

We provide a characterisation of sparse regular tree languages via 
\emph{forbidden subtrees} and \emph{tree automata} (\Cref{thm:tree:condition}). 
As a consequence, we deduce that sparseness of regular tree
languages is decidable in linear time (\Cref{cor:trees:decidable}).
Furthermore, our results also provide a method for proving 
non-regularity of sparse tree languages (\Cref{cor:non-regular}).
Finally, we observe that our characterisation by forbidden subtrees does not hold
for non-uniform models of random binary trees, specifically the random binary 
search tree model (\Cref{ex:bst}).

A major motivation for regular $\omega$-languages, on the other hand,
is in model checking of reactive systems~\cite{Baier2008}. In this
setting, we are given a formal specification of a system (e.g.~some
piece of hardware or software) and a specification of properties that the
system should have or not have. If both the system and the property
are specified in linear temporal logic (LTL), model checking
essentially reduces to checking whether a certain regular
$\omega$-language is empty or not.
Theorem~\ref{thm:infwords} extends this to \emph{almost sure model
  checking}, i.e.~the question whether a certain system \emph{almost
  surely} satisfies a given property. If the system fails to do so, it
is even possible to compute a counterexample, in this case a set of
$\omega$-words of strictly positive measure that almost surely violates
the condition to be checked.

Note that because the complement of a regular language is again
regular, and $L$ is sparse if and only if its complement has
(asymptotic) probability $1$, the same algorithms can be used to check
whether a language of trees or $\omega$-words has (asymptotic)
probability $1$.

\subsection*{Techniques and Related Work}
Sparseness of regular languages was investigated by
Sin'ya~\cite{Sinya2015}. He showed that a regular language
$L \subseteq \Sigma^*$ over a finite alphabet $\Sigma$ is sparse if,
and only if, $L$ has an excluded factor, i.e.
\(
  \Sigma^* x \Sigma^* \cap L = \emptyset
\)
for some $x \in \Sigma^*$. The proof of this characterisation was 
simplified by Koga in~\cite{Koga2019}. 
We build on their insights and generalize them significantly to handle 
the richer settings of regular tree languages and $\omega$-languages.

The trees we are most interested in are \emph{rooted}, \emph{ordered},
\emph{unranked} trees, i.e.~trees with a distinguished root node
in which the number of children of a node is unbounded and for every
node there is a linear order on the set of its children. The terms
\emph{siblinged} or \emph{planar} are sometimes used for what we call
ordered, e.g. in~\cite{Flajolet2009,Benedikt09}. We state our results
on regular tree languages in terms of binary trees with distinguished
left and right children, cf.~Section~\ref{sub:treelangs}. This allows
for a simpler definition of tree automata and comes at no loss of
generality because ordered unranked trees are bijectively
MSO-interpretable in binary trees, a folklore technique that is
spelled out, for example, in~\cite[Sec.~10.2]{Flum2006}.

Because regular languages exactly capture the expressive power of
monadic second-order logic (MSO), the (asymptotic) probability of
regular languages can be rephrased as the (asymptotic) probability
that a random structure satisfies a certain MSO sentence. This has
been well studied: For every first-order logic (FO) sentence with
relational signature, the probability that a random labelled structure
satisfies the sentence converges to either $0$ or
$1$ (this was proved independently in~\cite{Glebskij1969}
and~\cite{Fagin1976}, cf.~\cite[Ch.~4]{ebbflu99}), a result that has
been extended by Spencer~\cite{Spencer2013} to Erdős-Rényi graphs with
more general edge probabilities. The fact that every sentence is
satisfied asymptotically with probability $0$ or $1$ is usually called
\emph{$0$-$1$ law}.
For monadic second-order logic (MSO), McColm~\cite{McColm2002} showed
a 0-1 law for MSO on labelled trees. Note that McColm (crucially)
considers trees without a distinguished root, while our trees are
rooted. More recently, Malyshkin and Zhukovskii~\cite{Malyshkin2021}
showed a 0-1 law for MSO on finite trees with probability measures
other than uniform probabilities, namely uniform and preferential
attachment. Further results on the limit behaviour of MSO sentences on
various graph classes can be found in~\cite{Dawar2019}
and~\cite{Atserias2018}.
These results are not directly applicable to our setting. In fact,
there can be no $0$-$1$ law for regular languages: Both in trees and
in $\omega$-words there is some designated element (such as the first
position in a word or the root of a tree), and the set of all
structures in which this element gets a specific label is a regular
set with (asymptotic) density strictly between $0$ and $1$.

Recently, Niwiński et al.~\cite{Niwinski2023} proved that the exact
probability of acceptance for MSO formulae in infinite random trees is
computable. However, this can not be used to decide sparseness for
languages of finite trees or infinite words.

A necessary and sufficient condition for an $\omega$-language to have
strictly positive measure has already been proved by
Staiger~\cite{Staiger1998}. However, Staiger does not address the
question of decidability of this property, and his proof uses deep
results from topology and Kolmogorov complexity, namely
a connection between the \emph{subword complexity} of an $\omega$-word 
$\xi$ and the \emph{Hausdorff dimension} of the regular $\omega$-languages
containing $\xi$. Courcoubetis et al.~\cite{Courcoubetis1995} treated the 
sparseness problem for $\omega$-languages under the name \emph{probabilistic emptyness} 
and showed it to be decidable with methods similar to ours. By relating
non-sparseness to infix completeness, our (shorter) proof has the
additional feature of giving an exact characterisation of non-sparse
$\omega$-languages. Furthermore, the algorithm we present in
Thm.~\ref{thm:infwords} is the first to provide a concise
representation of a set $M$ of $\omega$-words that is almost a subset
of $L$, in the sense that $\Prob(M \setminus L) = 0$. This is
extremely valuable in the context of LTL model checking, as it amounts
to a non-negligible set of counterexamples.
\section{Preliminaries}
\label{sec:preliminaries}
\subsection{Languages and Automata}
By $\Sigma$ we denote a finite alphabet assuming $\abs{\Sigma} > 1$ to
avoid trivialities. For $k \geq 0$ we denote by $\Sigma^\ast$,
$\Sigma^k$, and $\Sigma^{\leq k}$ the sets of all strings, those of
length $k$, and of length at most $k$, respectively. The empty string
is denoted by $\emptyword$.
A \emph{language} is any subset $L \subseteq \Sigma^*$. Concatenation of
strings $u,v \in \Sigma^*$ is denoted by $u\cdot v$ or just $uv$, and
for a language $L$ we set $L^* \coloneqq \{ u_1\cdots u_k \sthat k \geq 0,
u_1,\ldots,u_k \in L\}$.

An \emph{$\omega$-word} over the alphabet $\Sigma$ is an infinite
sequence $a_0a_1a_2\cdots$ of letters $a_i \in \Sigma$. The set of all
$\omega$-words is denoted by $\Sigma^\omega$, and an $\omega$-language
is any subset $L \subseteq \Sigma^\omega$. A (finite) word $u \in
\Sigma^*$ and an $\omega$-word $v \in \Sigma^\omega$ may be
concatenated to $uv \in \Sigma^\omega$, and for a language $U
\subseteq \Sigma^*$ and an $\omega$-language $V \subseteq
\Sigma^\omega$ we set
\[
  U^\omega \coloneqq \{ u_1u_2u_3\cdots \sthat u_1,u_2,\ldots \in U \}
  \qqtext{and}
  UV \coloneqq \{ uv \sthat u \in U \text{ and }v \in V \}.
\]
A word $u \in \Sigma^*$ is a \emph{prefix} of $v \in \Sigma^* \cup
\Sigma^\omega$, written $u \preceq v$, if $v = uw$ for some $w \in
\Sigma^* \cup \Sigma^\omega$.

A \emph{deterministic finite automaton (DFA)} over the alphabet
$\Sigma$ is a tuple $\cA = (\Sigma,Q,q_0,\delta,A)$ consisting of a finite
set $Q$ of \emph{states}, a designated \emph{initial state} $q_0 \in
Q$, a \emph{transition function} $\delta \colon Q \times \Sigma \to Q$
and a set $A \subseteq Q$ of \emph{accepting states}. We extend
$\delta$ to a function $\hat\delta \colon Q \times \Sigma^* \to Q$ by
setting
$\hat\delta(q,\emptyword) \coloneqq q$
and
$\hat\delta(q,aw) \coloneqq \hat\delta\big(\delta(q,a),w\big)$
 for $a \in \Sigma$ and $w \in \Sigma^*$.
The automaton $\cA$ \emph{accepts} a word $w \in \Sigma^*$ if
$\hat\delta(q_0,w) \in A$. The \emph{language accepted by} $\cA$ is
the set of words accepted by it. A language is called \emph{regular}
if it is accepted by some DFA.

For states $q,q' \in Q$ we say that $q'$ is \emph{reachable} from $q$,
written $q \leadsto q'$, if $\hat\delta(q,w) = q'$ for some word $w
\in \Sigma^*$. If $q'$ is reachable from the initial state $q_0$ we
just call $q'$ reachable. The relation $\leadsto$ is obviously
reflexive and transitive, so the relation
\(
q \approx q' \Leftrightarrow
(q \leadsto q'\text{ and }q' \leadsto q)
\)
is an equivalence relation. Its equivalence classes are called the
\emph{reachability classes} of $\cA$. A reachability class $C$ is
called \emph{closed} if $\delta(q,a) \in C$ for all $q \in C$ and $a
\in \Sigma$%

There are various generalisations of finite automata to
$\omega$-words, resulting in a robust concept of regular
$\omega$-languages, cf.~\cite{Thomas1997}. We use the following
characterisation:
\begin{theorem}[cf.~\cite{Staiger1997}, Thm.~3.2]
  \label{thm:omegareg}
  An $\omega$-language $L \subseteq \Sigma^\omega$ is regular if and only if
  there is a $k \geq 1$ and regular languages $U_i,V_i \subseteq
  \Sigma^*$ for $i = 1,\ldots,k$ such that
  $L = \bigcup_{i=1}^k U_iV_i^\omega$.
\end{theorem}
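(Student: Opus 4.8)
The plan is to take ``regular $\omega$-language'' to mean one accepted by a nondeterministic Büchi automaton — the robust notion alluded to above — and to prove the two directions separately. Both rest on translating between the infinitary acceptance condition of such automata and the $\omega$-power operation on languages of finite words.

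For the ``if'' direction I would first check that every set of the shape $UV^\omega$ with $U,V \subseteq \Sigma^*$ regular is Büchi-recognisable, and then invoke closure of Büchi-recognisable languages under finite union. Recognisability of $UV^\omega$ is by an explicit construction: starting from NFAs for $U$ and $V$, build a Büchi automaton that first reads a prefix in $U$, then repeatedly reads blocks from $V$, looping back to the start of the $V$-component whenever a block is completed, with the Büchi condition requiring that this loop-back be taken infinitely often. Closure under union is the usual disjoint-union construction. Since a finite union of Büchi-recognisable languages is again Büchi-recognisable, $\bigcup_{i=1}^k U_iV_i^\omega$ is regular.

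The substance is in the ``only if'' direction. Let $\cA = (\Sigma,Q,q_0,\Delta,F)$ be a Büchi automaton for $L$, so a word is accepted iff it admits a run visiting some state of $F$ infinitely often. For states $p,q \in Q$ write $U_{p,q}$ for the set of finite words labelling some run from $p$ to $q$; each $U_{p,q}$ is regular, being accepted by the NFA obtained from $\cA$ by making $p$ initial and $q$ the sole accepting state. For $f \in F$ set $V_f \coloneqq U_{f,f} \setminus \{\emptyword\}$, the \emph{nonempty} words labelling a run from $f$ back to $f$. I would then prove
\[
  L = \bigcup_{f \in F} U_{q_0,f} \, V_f^\omega,
\]
which has the required form with $k = \abs{F}$. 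For ``$\subseteq$'', given an accepting run on $w$, fix an $f \in F$ occurring infinitely often and cut $w$ at the strictly increasing positions where the run sits in state $f$; this exhibits $w$ as $u v_1 v_2 \cdots$ with $u \in U_{q_0,f}$ and each $v_i \in V_f$. For ``$\supseteq$'', splice a run on $u$ from $q_0$ to $f$ together with runs on each $v_i$ from $f$ to $f$: the result is an infinite run on $w$ that, because every $v_i$ is nonempty, passes through $f$ at infinitely many distinct positions, hence is accepting.

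The main obstacle is the bookkeeping around the infinitary condition: one must ensure that ``visiting $f$ infinitely often'' corresponds exactly to an $\omega$-decomposition into nonempty $V_f$-blocks, which is why removing $\emptyword$ from $U_{f,f}$ is essential — otherwise a decomposition padded with empty blocks could fail to force infinitely many genuine returns to $f$. A secondary point is purely definitional: the argument presupposes that Büchi automata capture precisely the regular $\omega$-languages; if ``regular'' is instead fixed via MSO-definability, one additionally cites the Büchi--Elgot--Trakhtenbrot correspondence for $\omega$-words to bridge the two notions.
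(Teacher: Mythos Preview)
The paper does not actually prove this theorem: it is stated with a citation to Staiger and used as a black box, so there is no ``paper's own proof'' to compare against. Your argument is the standard one (essentially Büchi's original decomposition) and is correct; the key points --- pigeonholing to a single $f \in F$ visited infinitely often, and excising $\emptyword$ from $U_{f,f}$ so that the spliced run genuinely advances --- are both handled. Nothing to fix.
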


\subsection{Tree Languages}
\label{sub:treelangs}

We will mostly be concerned with \emph{binary trees} in the sense of
Knuth~\cite[2.3]{taocp1}. These trees have a distinguished root node,
and every node may have a left and/or a right child. We formalise this
as a prefix-closed language $T \subseteq D^\ast$ over the alphabet
$D = \{l ,r\}$. The empty word $\emptyword$ denotes the root of the
tree $T$, a node $u \in T$ may have children $ul, ur \in T$ and is the
\emph{parent} of these.

For a finite alphabet $\Sigma$ and binary tree $T$ we call a function
$\lambda \colon T \to \Sigma $ a \emph{$\Sigma$-labelling} of $T$ and
the pair $(T, \lambda)$ a $\Sigma$-labelled binary tree. We denote
the set of all finite $\Sigma$-labelled binary trees by
$\mathcal{B}^\Sigma$.
We often just write $T \in \mathcal{B}^\Sigma$ and refer to the labelling as $\lambda_T$
when necessary. A set $L \subseteq \mathcal{B}^\Sigma$ is called a
\emph{tree language}. Note that $\mathcal{B}^\Sigma$ contains the
empty tree $\emptyset$.

For $S, T \in \mathcal{B}^\Sigma$ we say that $S$ is a \emph{subtree}
of $T$ and write $S \preceq T$ if there exists $u \in D^\ast$ such
that $uD^\ast \cap T = uS$ and $\lambda_S(v) = \lambda_T(uv)$ for all
nodes $v \in S$. Each node $u \in T$ induces a subtree $T(u)$ of $T$
consisting of $u$ and all its descendants. We write $T_l$ for the
\emph{left subtree} $T(l)$ and $T_r$ for the \emph{right subtree}
$T(r)$ of the root of $T$, respectively.

For two trees $S, T \in \mathcal{B}^\Sigma$ and $a \in \Sigma$ we
define the $\Sigma$-labelled binary tree $\conc_a(S, T)$ which
consists of the root labelled with $a$ and has $S$ and $T$ as left and
right subtrees respectively.

For a tree language $L \subseteq \mathcal{B}^\Sigma$ and
$T \in \mathcal{B}^\Sigma$ we define the tree languages
\[
  \begin{split}
    LT &\coloneqq \bigcup_{a \in \Sigma}\{\conc_a(S,T) \sthat S \in L\}
    \\
    LT^{-1} &\coloneqq \{S \in \mathcal{B}^\Sigma \sthat \text{ There
      exists } a \in \Sigma \text{ with } \conc_a(S,T) \in L\}
  \end{split}
\]
and $TL, T^{-1}L$ are defined analogously. For trees
$T_1,\dotsc,T_k \in \mathcal{B}^\Sigma$ we inductively set
\[
  \begin{split}
    L[T_1]^{-1} &\coloneqq LT_1^{-1} \cup T_1^{-1}L
    \\
    L[T_k,\dotsc,T_1]^{-1} &\coloneqq  (L[T_{k-1},\dotsc,T_1]^{-1})T^{-1}_k \cup T^{-1}_k (L[T_{k-1},\dotsc,T_1]^{-1}),
  \end{split}
\]
so
$L[T_k,\dotsc,T_1]^{-1}$ is the language of all trees that can be
concatenated successively with
$T_k,\dotsc,T_1$ to obtain a tree from $L$.

A \emph{tree automaton} is a tuple $\mathcal{A} = (\Sigma, Q, \Delta,
A)$ consisting of a finite set
$Q$ of \emph{states}, a finite alphabet
$\Sigma$, a \emph{transition relation} $\Delta \subseteq (Q \cup
\{\bot\}) \times (Q \cup \{\bot\}) \times \Sigma \times
Q$, and a set $A \subseteq
Q$ of \emph{accepting} states.  Given a
$\Sigma$-labelled binary tree $T$, a \emph{run} of
$\mathcal{A}$ on $T$ is a function $d \colon D^\ast \to (Q \cup
\{\bot\})$ such that
\begin{itemize}
\item if $u \notin T$ then $d(u) = \bot$, and
\item if $u \in T$ then $(d(ul), d(ur),\lambda_T(u), d(u)) \in \Delta$.
\end{itemize}
A run $d$ is called \emph{accepting} if $d(\emptyword) \in A$ and we say that an automaton $\mathcal{A}$ \textit{accepts} a tree $T$ if there is an accepting run of $\mathcal{A}$ on $T$. 

For states $q \in Q$ and $q' \in Q\cup \{\bot\}$, we say that $q'$ is
$1$-step reachable from $q$ (written $q \leadsto_1 q'$) if
$(q,\tilde q,a,q') \in \Delta$ or $(\tilde q, q, a, q') \in \Delta$
for some $a \in \Sigma$ and $\tilde q \in Q\cup \{\bot\}$. The
reflexive transitive closure $\leadsto \coloneqq \leadsto_1^*$ is called
reachability, and a state $q \in Q$ is called \emph{reachable} if
$\bot \leadsto q$. Equivalently, $q$ is reachable if and only if
there exists a tree $T \in \mathcal{B}^\Sigma$ and a run $d$ of
$\mathcal{A}$ on $T$ with $d(\emptyword) = q$. We call $\mathcal{A}$
\emph{reduced} if every state is reachable.

\subsection{Random Trees and \texorpdfstring{$\omega$-words}{Infinite Words}}

For $n \in \mathbb{N}$ we denote by $\mathcal{B}^\Sigma_n$ and
$\mathcal{B}^\Sigma_{<n}$ the set of $\Sigma$-labelled binary trees of
size $n$ and size strictly less than $n$ respectively.  We consider
$\mathcal{B}^\Sigma_n$ as a finite discrete probability space equipped
with the uniform distribution $\Prob_n$ by setting $\Prob_n(T) =
\frac{1}{\abs{\mathcal{B}_n^\Sigma}}$ for $T \in
\mathcal{B}^\Sigma_n$. Note that $\abs{\mathcal{B}_n^\Sigma} = C_n
\cdot \abs{\Sigma}^n$, where $C_n = \frac{1}{n +1}
\binom{2n}{n} \approx \frac{4^n}{n\sqrt{\pi n}}$ is the $n$-th \emph{Catalan
  number}, cf.~\cite[2.3.4.4]{taocp1}.

The \emph{asymptotic density} (or \emph{asymptotic probability}) of a tree language
$L \subseteq \mathcal{B}^\Sigma$ is defined as
\[
  \Problim(L) \coloneqq \lim_{n \to \infty} \Prob_n(L \cap
  \mathcal{B}^\Sigma_n)
  = \lim_{n \to \infty} \frac{\abs{L \cap \cB^\Sigma_n}}{C_n\cdot \abs{\Sigma}^n},
\]

given the limit exists. We set
$\Problimsup(L) \coloneqq \limsup_{n \to \infty} \Prob_n(L \cap
\mathcal{B}^\Sigma_n)$ and obtain the following lemma, which is easily verified:
\begin{lemma}
  \label{lem:trees:uniform}
  For all $L_1,L_2 \subseteq \mathcal{B}^\Sigma$, the following hold:
  \begin{enumerate}
    \item $\Problimsup(L) = 0 \Leftrightarrow \Problim(L) = 0$
    \item $L_1 \subseteq L_2 \Rightarrow \Problimsup(L_1) \leq \Problimsup(L_2)$
    \item $\Problimsup(L_1 \cup L_2) \leq \Problimsup(L_1) + \Problimsup(L_2)$
    \item $\Problimsup(\mathcal{B}^\Sigma) = 1.$
  \end{enumerate}
\end{lemma}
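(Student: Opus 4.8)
The plan is to reduce each of the four claims to an elementary statement about the sequences $\bigl(\Prob_n(L \cap \mathcal{B}_n^\Sigma)\bigr)_n$, exploiting that for every fixed $n$ the set $\mathcal{B}_n^\Sigma$ is a finite uniform probability space. In particular, each term $\Prob_n(L \cap \mathcal{B}_n^\Sigma)$ lies in $[0,1]$, and $\Prob_n$ is a genuine probability measure on $\mathcal{B}_n^\Sigma$, so it is monotone and subadditive at each individual level $n$. The four claims then follow by combining these level-wise facts with standard properties of $\limsup$.

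For the monotonicity claim (2), I would observe that $L_1 \subseteq L_2$ implies $L_1 \cap \mathcal{B}_n^\Sigma \subseteq L_2 \cap \mathcal{B}_n^\Sigma$ for every $n$, whence $\Prob_n(L_1 \cap \mathcal{B}_n^\Sigma) \leq \Prob_n(L_2 \cap \mathcal{B}_n^\Sigma)$; taking $\limsup$ on both sides, which preserves $\leq$, gives the result. For subadditivity (3), the level-wise union bound $\Prob_n\bigl((L_1\cup L_2)\cap\mathcal{B}_n^\Sigma\bigr) \leq \Prob_n(L_1\cap\mathcal{B}_n^\Sigma) + \Prob_n(L_2\cap\mathcal{B}_n^\Sigma)$ combines with the general inequality $\limsup(a_n+b_n)\leq\limsup a_n + \limsup b_n$. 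For (4), I simply note that $\mathcal{B}^\Sigma \cap \mathcal{B}_n^\Sigma = \mathcal{B}_n^\Sigma$, so $\Prob_n(\mathcal{B}^\Sigma\cap\mathcal{B}_n^\Sigma)=1$ for all $n$, and the limit (hence $\limsup$) of the constant sequence $1$ is $1$.

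The only claim carrying genuine content is the equivalence (1), and this is where I expect the argument to require a moment's care. The direction $\Problim(L)=0 \Rightarrow \Problimsup(L)=0$ is immediate, since whenever a limit exists it coincides with the $\limsup$. For the converse I would use nonnegativity: since $0 \le \Prob_n(L\cap\mathcal{B}_n^\Sigma)$ for every $n$, we have $0 \le \liminf_n \Prob_n(L\cap\mathcal{B}_n^\Sigma) \le \limsup_n \Prob_n(L\cap\mathcal{B}_n^\Sigma) = \Problimsup(L) = 0$, which forces $\liminf$ and $\limsup$ to agree at $0$; hence the limit exists and $\Problim(L)=0$. The point I would flag explicitly is that for a general language $L$ the limit defining $\Problim(L)$ need not exist, so (1) is precisely the assertion that vanishing of the $\limsup$ is strong enough to resurrect the limit, and nonnegativity of probabilities is exactly what makes this work. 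This is the only place where one cannot be purely formal, though it remains routine.
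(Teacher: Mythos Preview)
Your argument is correct and is exactly the routine verification the paper has in mind; indeed, the paper does not spell out a proof at all but simply declares the lemma ``easily verified,'' and your level-wise reasoning (monotonicity and subadditivity of $\Prob_n$, together with nonnegativity to handle item~(1)) is precisely the intended justification.
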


We use standard terminology from probability theory,
cf.~\cite{Williams1991}. We turn the set $\Sigma^\omega$ of
$\omega$-words over the finite alphabet $\Sigma$ into a probability
space by making each of the projections
$\pi_i \colon \Sigma^\omega \to \Sigma, w_1w_2\ldots \mapsto w_i$
measurable. By $\Prob$ we denote the probability measure for which the
projections are iid random variables with
\( \Prob(w_i = a) = \abs{\Sigma}^{-1} \) for every $i \geq 1$ and
$a \in \Sigma$.
Note that for any $U, V \subseteq \Sigma^*$,
$ UV^\omega = \bigcap_{k \geq 1} \bigcup_{\ell \geq k}
 \big\{ w_1w_2\ldots \sthat w_1\ldots w_\ell \in UV^* \big\}$
is measurable, so by Thm.~\ref{thm:omegareg}, every $\omega$-regular $L
\subseteq \Sigma^\omega$ is measurable and the probability (or
measure) $\Prob(L)$ is well-defined.

We review some basic facts about discrete-time Markov chains with a
finite state space, cf.~\cite{Norris1997}: Fix a finite set $I$ of
\emph{states} and for every $i,j \in I$ a \emph{transition
probability} $p_{ij} \geq 0$ such that $\sum_{j \in I}p_{ij} = 1$ for
every $i \in I$. A \emph{Markov chain} with state space $I$ and
transition probabilities $P = (p_{ij})_{i,j \in I}$ is a sequence
$(X_t)_{t \in \N}$ of random variables taking values in $I$ such that
\(
\Prob(X_{t+1} = j \sthat X_t = i) = p_{ij}
\)
for every $t \geq 0$ and $i,j \in I$. The probability distribution of
$X_0$ is called \emph{initial distribution} of the chain. The initial
distribution and the transition probabilities $P$ together determine
the joint distributions of the $X_t$ by
\[
\Prob(X_0 = i_0, \ldots, X_t = i_t)
= \Prob(X_0 = i_0)\cdot p_{i_0i_1} \cdots p_{i_{t-1}i_t}.
\]
If $\Prob(X_0 = j) = \delta_{ij}$ we say that the chain is started in
state $i$ and denote the resulting probability distribution by
$\Prob_i$. With this definition,
 \( \Prob( X_{s+t} = j \sthat X_s = i ) = \Prob_i(X_t = j)\).
A state $i \in I$ is called \emph{recurrent} if
\(
\Prob_i( X_t = i \text{ for infinitely many t}) = 1.
\)
We say that a state $i \in I$ \emph{leads to} a state $j \in I$,
written $i \to j$, if $\Prob_i(X_t = j\text{ for some }t \in \N) > 0$,
and that $i$ and $j$ \emph{communicate} (written $i \leftrightarrow
j$) if both $i\to j$ and $j\to i$. Then $\leftrightarrow$ is an
equivalence relation on $I$ and its equivalence classes are just
called \emph{classes} of states. A class $C \subseteq I$ is called
\emph{closed} if $i \in C$ and $i \to j$ imply $j \in C$. We need the
following theorem:
\begin{theorem}[{cf.~\cite[Thm.~1.5.6]{Norris1997}}]
  \label{thm:recurrent}
  If $C \subseteq I$ is a closed class, every $i \in C$ is recurrent.
\end{theorem}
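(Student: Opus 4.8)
The plan is to reduce the statement to two facts valid for any finite Markov chain, both ultimately resting on the strong Markov property at successive visit times. Write $p^{(t)}_{ij} \coloneqq \Prob_i(X_t = j)$ for the $t$-step transition probabilities, and for a state $j$ let $V_j$ denote the total number of indices $t \geq 0$ with $X_t = j$ (possibly $\infty$). The first step is to establish the recurrence/transience dichotomy for a single state. Let $h_i \coloneqq \Prob_i(X_t = i \text{ for some } t \geq 1)$ be the return probability. Decomposing a trajectory at its successive returns to $i$ and applying the Markov property at each return time shows that, started from $i$, the number of visits $V_i$ is geometric: $\Prob_i(V_i = k) = h_i^{k-1}(1-h_i)$ for $k \geq 1$ when $h_i < 1$, while $V_i = \infty$ almost surely when $h_i = 1$. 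Hence $i$ is recurrent precisely when $h_i = 1$, and since $\EVal_i[V_i] = \sum_{t \geq 0} p^{(t)}_{ii}$, recurrence is equivalent to $\sum_t p^{(t)}_{ii} = \infty$. Moreover a transient state is visited only finitely often, so $\Prob_j(V_j = \infty) = 0$, and decomposing at the first hitting time of $j$ gives $\Prob_i(V_j = \infty) = 0$ for every starting state $i$.

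Second, I would exploit finiteness. Fix $i \in C$ and run the chain from $i$. Because $C$ is closed, the chain never leaves $C$, so $X_t \in C$ for all $t$ almost surely. Since $C \subseteq I$ is finite while the index set $\{t \geq 0\}$ is infinite, the pigeonhole principle forces at least one state of $C$ to be visited infinitely often, i.e. $\Prob_i(\exists j \in C : V_j = \infty) = 1$. If every state of $C$ were transient, the previous paragraph would give $\Prob_i(V_j = \infty) = 0$ for each of the finitely many $j \in C$, whence $\Prob_i(\exists j \in C : V_j = \infty) \leq \sum_{j \in C} \Prob_i(V_j = \infty) = 0$, a contradiction. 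Therefore $C$ contains at least one recurrent state $j_0$.

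Finally, I would show that recurrence is a property of the whole communicating class, upgrading \emph{some state of $C$ is recurrent} to \emph{every state of $C$ is recurrent}. Let $k \in C$ be arbitrary; since $C$ is a class we have $k \leftrightarrow j_0$, so there are $m,n \geq 0$ with $p^{(m)}_{k j_0} > 0$ and $p^{(n)}_{j_0 k} > 0$. The Chapman--Kolmogorov relations give $p^{(m+t+n)}_{kk} \geq p^{(m)}_{k j_0}\, p^{(t)}_{j_0 j_0}\, p^{(n)}_{j_0 k}$, and summing over $t$ together with $\sum_t p^{(t)}_{j_0 j_0} = \infty$ (as $j_0$ is recurrent) yields $\sum_s p^{(s)}_{kk} = \infty$, so $k$ is recurrent by the criterion of the first step. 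As $k \in C$ was arbitrary, every state of $C$ is recurrent.

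The main obstacle is the first step: turning the informal phrase \emph{decompose at successive returns} into a rigorous derivation of the geometric law for $V_i$ requires the strong Markov property at the random return times, which is the one genuinely non-elementary ingredient. Once the dichotomy and the $\sum_t p^{(t)}_{ii}$ criterion are available, the finiteness and class-transfer arguments are elementary; since the statement is classical, one may alternatively invoke it directly as in~\cite[Thm.~1.5.6]{Norris1997}.
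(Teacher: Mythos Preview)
The paper does not supply its own proof of this theorem: it is stated with a reference to \cite[Thm.~1.5.6]{Norris1997} and used as a black box, so there is nothing to compare against. Your argument is the standard textbook proof (indeed essentially the one in Norris): establish the recurrence/transience dichotomy via the geometric law for the number of returns and the criterion $\sum_t p^{(t)}_{ii} = \infty$, use finiteness of the closed class to force at least one recurrent state, and then transfer recurrence across the communicating class via Chapman--Kolmogorov. All three steps are correct as stated, and you rightly flag the strong Markov property as the only non-elementary ingredient.
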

\begin{corollary}
  \label{cor:markovrecurr}
  If $C \subseteq I$ is a closed class and $i \in C$, then
  $\Prob(X_t = i\text{ infinitely often} \sthat X_t \in C\text{ for some
  }t) = 1$ for every $s \in \N$.
\end{corollary}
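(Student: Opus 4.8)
The plan is to condition on the state occupied at time $s$ and use the Markov property to reduce to a chain started \emph{inside} the closed class $C$, where \Cref{thm:recurrent} applies. Write the conditioning event as a disjoint union $\{X_s \in C\} = \bigsqcup_{j \in C}\{X_s = j\}$. Since $C$ is closed, on each of these events the chain never leaves $C$ afterwards, so the entire future trajectory stays in $C$. By the defining property of a Markov chain, conditioned on $\{X_s = j\}$ the shifted process $(X_{s+t})_{t \in \N}$ is again a Markov chain with the same transition matrix $P$ but started in $j$; hence
\[
  \Prob\big(X_t = i \text{ i.o.} \sthat X_s = j\big) = \Prob_j\big(X_t = i \text{ i.o.}\big).
\]
By the law of total probability it therefore suffices to prove $\Prob_j(X_t = i \text{ i.o.}) = 1$ for every $j \in C$. (The variant conditioning on ``$X_t \in C$ for some $t$'' reduces to this by conditioning on the first entrance time into $C$ and invoking the strong Markov property, since $C$ is closed.)

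Next I would fix $j \in C$ and transfer recurrence from $i$ to $j$. By \Cref{thm:recurrent} the state $i$ is recurrent, i.e. $\Prob_i(X_t = i \text{ i.o.}) = 1$. Because $i$ and $j$ lie in the same communicating class $C$, we have $j \to i$, so there is some $m \in \N$ with $\Prob_j(X_m = i) > 0$. The key sub-claim is that in fact the chain started in $j$ hits $i$ almost surely, i.e. $\Prob_j(\exists\, t : X_t = i) = 1$. Granting this, an application of the (strong) Markov property at the first hitting time of $i$ splices the trajectory after that time onto a copy of the chain started in $i$, which returns to $i$ infinitely often almost surely, and we conclude $\Prob_j(X_t = i \text{ i.o.}) = 1$.

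The main obstacle is precisely this sub-claim, that hitting probabilities inside a recurrent communicating class equal $1$. It is a standard fact about finite Markov chains --- the restriction of the chain to the closed, communicating set $C$ is irreducible and recurrent, and in an irreducible recurrent chain every state is reached, and then revisited infinitely often, from every starting state almost surely --- but it is not literally among the facts recalled above, so it must either be cited (e.g. from~\cite{Norris1997}) or derived. A self-contained derivation would set $h_j \coloneqq \Prob_j(\exists\, t \ge 0 : X_t = i)$ and show that the set $\{j \in C : h_j < 1\}$ being nonempty would force the chain, with positive probability, to remain forever in $C \setminus \{i\}$, contradicting the recurrence of $i$; a first-step decomposition together with $j \to i$ makes this precise and yields $h_j = 1$ for all $j \in C$. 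Everything else is routine bookkeeping with the Markov property.
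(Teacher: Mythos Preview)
The paper does not supply a proof for this corollary; it is stated immediately after \Cref{thm:recurrent} (itself cited from~\cite{Norris1997}) and evidently regarded as an immediate consequence. Your argument is correct and spells out exactly what the paper leaves implicit: condition on the (first) time the chain enters $C$, use the (strong) Markov property to reduce to a chain started at some $j \in C$, and then combine recurrence of $i$ with the fact that in a finite closed communicating class every state hits every other state almost surely. You also rightly flag that this last hitting-probability fact, while standard and available in~\cite{Norris1997}, is not among the statements explicitly recalled in the preliminaries and so needs either a citation or the short first-step-analysis derivation you sketch. Incidentally, the dangling ``for every $s \in \N$'' in the statement looks like a typographical remnant (presumably the conditioning event was once $\{X_s \in C\}$); you handle both readings.
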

\section{Characterising Sparseness of Regular Tree Languages}
\label{sec:trees}

In this section we exactly characterise regular tree languages with
asymptotic density $0$ by excluded factors, namely \emph{forbidden
  subtrees}. This generalises Sin'ya's result for regular languages.
The well-known \emph{infinite monkey theorem} states that a language
of finite words $L \subseteq \Sigma^*$ has asymptotic density $1$ if
$\Sigma^* x \Sigma^* \subseteq L$ for some $x \in \Sigma^*$. This has
been generalised to tree languages by Asada et
al.~\cite[Thm.~2.13]{Asada2019},
who prove that \emph{contexts} of up to logarithmic size appear
asymptotically almost surely in certain regular tree languages.
We only need a weaker version stated in
Theorem~\ref{thm:monkey:trees}, and give a comparatively short proof of
it using methods from \emph{analytic combinatorics}~\cite{Flajolet2009}. 
In~\Cref{thm:tree:condition} we then show that, for a
\emph{regular} tree language, the existence of a forbidden subtree is
a necessary condition for sparseness.

\begin{theorem}
  \label{thm:monkey:trees}
  $\Problim(\{T \in \mathcal{B}^{\Sigma} \sthat S \preceq T\}) = 1$
  for every nonempty $S \in \mathcal{B}^\Sigma$.
\end{theorem}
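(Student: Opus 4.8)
The plan is to pass to ordinary generating functions and to compare exponential growth rates. Write $\sigma = \abs{\Sigma}$ and let $s = \abs{S}$ be the number of nodes of the fixed tree $S$. Let $B(z) = \sum_n \abs{\mathcal{B}^\Sigma_n} z^n = \sum_n C_n\sigma^n z^n$ be the counting series of all labelled binary trees, which satisfies $B = 1 + \sigma z B^2$ and hence $B(z) = \frac{1 - \sqrt{1 - 4\sigma z}}{2\sigma z}$ with dominant singularity $\rho_B = \frac{1}{4\sigma}$. Let $A(z)$ be the counting series of the trees \emph{avoiding} $S$, i.e.\ those $T$ with $S \not\preceq T$. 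I want to show that $A$ has strictly larger radius of convergence than $B$; by Cauchy--Hadamard this forces $\frac{[z^n]A}{[z^n]B} \to 0$, which is precisely $\Prob_n(S \not\preceq T) \to 0$ and hence the claim.

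First I would set up a functional equation for $A$. A nonempty tree $\conc_a(T_l,T_r)$ avoids $S$ if and only if it is not equal to $S$ \emph{and} both $T_l$ and $T_r$ avoid $S$, while the empty tree avoids $S$ because $S$ is nonempty. The proper subtrees $S_l,S_r$ are strictly smaller than $S$ and therefore avoid $S$ themselves, so $S$ occurs exactly once among the trees $\conc_a(T_l,T_r)$ with $T_l,T_r$ avoiding $S$, and it is the only such tree that fails to avoid $S$. Removing this single tree yields
\[
  A = 1 + \sigma z A^2 - z^s,
\]
and solving the quadratic with $A(0) = 1$ gives $A(z) = \frac{1 - \sqrt{1 - 4\sigma z(1 - z^s)}}{2\sigma z}$.

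The decisive step is to locate the dominant singularity of $A$. Put $h(z) = 1 - 4\sigma z(1 - z^s)$, so that the only candidate singularities of $A$ are the zeros of $h$ (the factor $2\sigma z$ is harmless, since the numerator vanishes at $z = 0$). At $z = \rho_B = \frac{1}{4\sigma}$ one computes $h(\rho_B) = \rho_B^{\,s} > 0$, and more generally $4\sigma z(1 - z^s) < 4\sigma z \le 1$ for $z \in (0,\rho_B]$, so $h > 0$ on the whole segment $[0,\rho_B]$ and $A$ is analytic there. Since $A$ has non-negative coefficients and the inclusion of avoiding trees gives $[z^n]A \le [z^n]B$ and thus radius of convergence at least $\rho_B$, Pringsheim's theorem applies: the radius of convergence must itself be a singular point, and as $\rho_B$ is not one, the radius of convergence $\rho_A$ of $A$ is strictly greater than $\rho_B$.

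Finally I would combine this with the growth of $B$. By Cauchy--Hadamard, $\limsup_n ([z^n]A)^{1/n} = 1/\rho_A < 4\sigma$, whereas $([z^n]B)^{1/n} = (C_n\sigma^n)^{1/n} \to 4\sigma$ because $C_n \sim 4^n/(n\sqrt{\pi n})$. Hence $\Prob_n(S \not\preceq T) = \frac{[z^n]A}{C_n\sigma^n}$ decays exponentially to $0$, so $\Prob_n(S \preceq T) \to 1$ and the limit defining $\Problim$ exists and equals $1$. I expect the main obstacle to be the bookkeeping in the functional equation—arguing cleanly that exactly the single tree $S$ is to be subtracted—whereas the analytic half reduces, thanks to Pringsheim, to the one-line observation $h(\rho_B) > 0$.
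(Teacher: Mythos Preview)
Your proof is correct and follows essentially the same route as the paper: derive the generating function for $S$-avoiding trees from the recursive decomposition at the root, show its dominant singularity lies strictly to the right of $\tfrac{1}{4\sigma}$, and conclude exponential decay of the avoidance probability. The paper reaches the same functional equation via a bivariate generating function $f(u,z)$ tracking the number of occurrences of $S$ and then specialising to $u=0$; your direct univariate derivation and the explicit use of Pringsheim's theorem are a mild streamlining of that argument.
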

\begin{proof}
 First, note that 
 \[
 \begin{split}
   \Problimsup(\{T \in \mathcal{B}^{\Sigma} \sthat S \preceq T\})
   &=
   \limsup (1- \Prob \{T \in \mathcal{B}^{\Sigma} \sthat S \npreceq
   T\})
   \\
   &=
   1 - \liminf_{n \to \infty}\frac{\abs{\{T \in
       \mathcal{B}^\Sigma_n\sthat S \npreceq
       T\}}}{C_n\abs{\Sigma}^n}.
 \end{split}
 \]
  We fix a nonempty tree $S \in \mathcal{B}^\Sigma$ and examine the asymptotic
  behaviour of the sequence
  \[
  a_n \coloneqq \abs{\{T \in \mathcal{B}^\Sigma_n\sthat S \npreceq
    T\}}
  \]
  following the approach of~\cite[Example
    III.41]{Flajolet2009}.
  To analyse the generating function of the sequence $(a_n)_n$, 
  we denote by $f_{n,k}$ the number of $\Sigma$-labelled binary trees
  of size $n$ that contain $S$ as a subtree at $k$ different
  positions, and let
  \[
  f(u,z) \coloneqq \sum_{n,k \geq 0} f_{n,k}z^n u^k
  \]
  be its bivariate generating function. In particular $a_n = f_{n,0}$
  and $f(0,z) = \sum_n a_n z^n$ is the generating function of the
  sequence $(a_n)_{n \geq 0}$.

  By $\omega(T)$ we denote the number of distinct occurrences of $S$
  in a tree $T \in \mathcal{B}^\Sigma$. Then $\omega(\emptyset) = 0$
  and since $S$ can occur in the left or right subtree, or be the
  whole tree $T$, we get
  \(
  \omega(T) = \omega(T_l) + \omega(T_r) + [T = S]
  \),
  where $[T = S]$ is $1$ if $T = S$ and $0$ otherwise,
  for $T \not= \emptyset$.
  For the function $u \mapsto u^{\omega(T)}$ this can be
  rewritten as
  \begin{equation}
    \label{eqn:occurrences}
    u^{\omega(T)} = u^{\omega(T_l)} u^{\omega(T_r)} u^{[T = S]} = u^{\omega(T_l)} u^{\omega(T_r)} + [T = S](u-1),
  \end{equation}
  and justifying algebraic
  manipulations of formal power series as in~\cite[A.5.]{Flajolet2009}, we get:
  \begin{align*}
    f(u, z) & = \sum_{n=0}^\infty z^n \sum_{k=0}^\infty u^k f_{n,k} = \sum_{n=0}^\infty z^n \sum_{T \in \mathcal{B}^\Sigma_n} u^{\omega(T)}                                    \\
            & \overset{\eqref{eqn:occurrences}}{=} \sum_{n=0}^\infty z^n \sum_{T \in \mathcal{B}^\Sigma_n} \Big([T = S](u-1) + u^{\omega(T_l)} u^{\omega(T_r)} \Big)                     \\
            & = \sum_{n=0}^\infty z^n \sum_{T \in \mathcal{B}^\Sigma_n} [T = S](u-1) + \sum_{n=0}^\infty z^n \sum_{T \in \mathcal{B}^\Sigma_n} u^{\omega(T_l)} u^{\omega(T_r)} \\
            & = z^{m}(u-1) + 1 + \sum_{n=1}^\infty z^n \sum_{T \in \mathcal{B}^\Sigma_n} u^{\omega(T_l)} u^{\omega(T_r)}
  \end{align*}
  for $m \coloneqq \abs{S}$.
  We set
  \(
    f_n(u) \coloneqq \sum_{T \in \mathcal{B}^\Sigma_n} u^{\omega(T)} =  \sum_{k=0}^\infty u^k f_{n,k}
  \)
  and get
  \begin{align*}
    f(u, z) - z^{m}(u-1) - 1 & = \sum_{n=1}^\infty z^n \sum_{T \in \mathcal{B}^\Sigma_n} u^{\omega(T_l)} u^{\omega(T_r)}                                                                                                        \\
                             & = \sum_{n=1}^\infty z^n \sum_{j=0}^{n-1} \abs{\Sigma} \Bigl(\sum_{T_l \in \mathcal{B}^\Sigma_j}u^{\omega(T_l)}\Bigr) \Bigl(\sum_{T_r \in \mathcal{B}^\Sigma_{n-1-j}} u^{\omega(T_r)}\Bigr)       \\
                             & = z \abs{\Sigma} \sum_{n=1}^\infty z^{n-1} \sum_{j=0}^{n-1} \Bigl(\sum_{T_l \in \mathcal{B}^\Sigma_j}u^{\omega(T_l)}\Bigr) \Bigl(\sum_{T_r \in \mathcal{B}^\Sigma_{n-1-j}} u^{\omega(T_r)}\Bigr) \\
                             & = z \abs{\Sigma} \sum_{n=0}^\infty
    \sum_{j=0}^{n} z^j f_j(u) \cdot z^{n-j}f_{n-j}(u)                                                                                                                        
                              = z \abs{\Sigma} f(u,z)^2.
  \end{align*}
  Solving this quadratic equation for $f(u,z)$ gives two candidate solutions
  \[
  f(u, z) = \frac{1 \pm \sqrt{1-4z\abs{\Sigma}-4\abs{\Sigma}z^{m+1}(u-1)}}{2z\abs{\Sigma}},
  \]
  and since $f(1,\frac{z}{\abs{\Sigma}})$ is the generating function
  of the Catalan numbers, subtracting the square root gives the right solution.
  The generating function $f(0 ,z)$ of the sequence $(a_n)_{n\geq 0}$
  is now given by
  \[
    f(0, z) = \frac{1-\sqrt{1-4z\abs{\Sigma}+4\abs{\Sigma}z^{m+1}}}{2z\abs{\Sigma}}.
  \]
  The function $f(0,z)$ is analytic at $0$ by extending it to
  $f(0,0)=1$.  The radius around $0$, where $f(0,z)$ is analytic is
  exactly the radius $R$, for which the polynomial
  $p(z)\coloneqq 1-4\abs{\Sigma}z + 4\abs{\Sigma}z^{m+1}$ is non-zero
  (cf. analyticity of $\sqrt{1-z}$).  Considering the reciprocal
  polynomial of $p$ yields~$R > \frac{1}{4 \abs{\Sigma}}$.

  Finally, by~\cite[Theorem IV.7 (Exponential Growth
  Formula)]{Flajolet2009} there exist a subexponential factor
  $(\eta_n)_n$ such that $a_n = R^{-n} \eta_n$.  Also, by Stirling's
  formula there exists a subexponential factor $(\theta_n)_n$ such
  that $C_n = 4^{-n} \theta_n$. In total, this yields
  \[
    \frac{\abs{\{T \in \mathcal{B}^\Sigma_n\sthat S \npreceq T\}}}{\abs{\mathcal{B}^\Sigma_n}} = \frac{a_n}{C_n \abs{\Sigma}^n}
    = \frac{R^{-n} \eta_n}{4^n \theta_n \abs{\Sigma}^n}
    = \Bigl(\frac{1}{4 R \abs{\Sigma}}\Bigr)^n \frac{\eta_n}{\theta_n}.
  \]
  Since the factors $\eta_n$ and $\theta_n$ are subexponential the
  sequence converges to $0$ as $n$ tends to infinity.
\end{proof}

In order to show the converse for \emph{regular} tree languages, we
lift the proof from~\cite{Koga2019} to the case of tree languages.
First, we derive bounds for the asymptotic density of specific tree
languages.

\begin{lemma} \label{lem:embed:prob}
  For $L \subseteq \mathcal{B}^\Sigma$ and
  $T, T_1,\dotsc,T_k \in \mathcal{B}^\Sigma$ the following hold:
  \begin{enumerate}
    \item $\Problimsup(LT) = \frac{1}{\abs{\Sigma}^{\abs{T}} 4^{\abs{T}+1}} \Problimsup(L)$
    \item $\Problimsup(LT^{-1}), \Problimsup(T^{-1}L) \leq \abs{\Sigma}^{\abs{T}} 4^{\abs{T}+1} \Problimsup(L)$
    \item $\Problimsup(L [T_k, \dotsc, T_1]^{-1}) \leq  2^k \abs{\Sigma}^{\sum_i\abs{T_i}} 4^{\sum_i\abs{T_i}+k} \Problimsup(L)$
  \end{enumerate}
\end{lemma}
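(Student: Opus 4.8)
The plan is to reduce all three estimates to counting trees of a \emph{shifted} size and then to control a ratio of Catalan numbers. Write $m\coloneqq\abs{T}$ and recall $\abs{\mathcal{B}^\Sigma_n}=C_n\abs{\Sigma}^n$ with $C_n\approx\frac{4^n}{n\sqrt{\pi n}}$, so that $\frac{C_{n+j}}{C_n}\to 4^{j}$ for each fixed $j$. The recurring manoeuvre is elementary: if $b_N=c_N\, a_{N+j}$ for a fixed integer $j$, with $c_N\to c\ge 0$ and $(a_n)_n$ bounded and nonnegative, then $\limsup_N b_N=c\cdot\limsup_n a_n$, because a fixed index shift leaves a $\limsup$ unchanged and a convergent nonnegative factor pulls out of it. All three parts apply this with $a_n=\frac{\abs{L\cap\mathcal{B}^\Sigma_n}}{C_n\abs{\Sigma}^n}$, so $\limsup_n a_n=\Problimsup(L)$.

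For (1) I would exhibit an exact bijection. A tree in $LT$ of size $n+m+1$ is $\conc_a(S,T)$ with $a\in\Sigma$ and $S\in L$ of size $n$, and conversely each pair $(a,S)$ yields a distinct such tree (the root recovers $a$, the left subtree recovers $S$); hence $\abs{LT\cap\mathcal{B}^\Sigma_{n+m+1}}=\abs{\Sigma}\cdot\abs{L\cap\mathcal{B}^\Sigma_n}$. Dividing by $C_{n+m+1}\abs{\Sigma}^{n+m+1}$ and isolating $a_n$ leaves the prefactor $\frac{\abs{\Sigma}\,C_n}{C_{n+m+1}\abs{\Sigma}^{m+1}}$, which tends to $\frac{1}{4^{m+1}\abs{\Sigma}^{m}}$; the manoeuvre then gives the stated equality.

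For (2) the same size shift occurs but the map is only injective. For each $S\in LT^{-1}$ fix, by definition, a witness $a_S\in\Sigma$ with $\conc_{a_S}(S,T)\in L$ and set $\phi(S)\coloneqq\conc_{a_S}(S,T)$; its left subtree is $S$, so $\phi$ is injective into $L\cap\mathcal{B}^\Sigma_{n+m+1}$ and $\abs{LT^{-1}\cap\mathcal{B}^\Sigma_n}\le\abs{L\cap\mathcal{B}^\Sigma_{n+m+1}}$. Dividing by $C_n\abs{\Sigma}^n$ produces the prefactor $\frac{C_{n+m+1}}{C_n}\abs{\Sigma}^{m+1}\to 4^{m+1}\abs{\Sigma}^{m+1}$ in front of $a_{n+m+1}$, and the manoeuvre yields the single-step bound; $T^{-1}L$ is identical via $\conc_{a_S}(T,S)$. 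Part (3) then follows by induction on $k$: unfolding the recursion gives $L[T_k,\dotsc,T_1]^{-1}=\big(L[T_{k-1},\dotsc,T_1]^{-1}\big)T_k^{-1}\cup T_k^{-1}\big(L[T_{k-1},\dotsc,T_1]^{-1}\big)$, so subadditivity (\Cref{lem:trees:uniform}) together with (2) applied to both summands multiplies the bound by a factor $2$ and by the single-step constant for $T_k$ at each of the $k$ stages; collecting these factors gives the product.

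The one genuinely delicate point is the exponents of $4$ and $\abs{\Sigma}$. The shift is $m+1$ (the root plus the $m$ nodes of $T$), so $4^{m+1}$ appears throughout; the power of $\abs{\Sigma}$ comes from comparing the normalisations $\abs{\Sigma}^{n+m+1}$ and $\abs{\Sigma}^n$ and, for the inverse operations, from the free witness label. It is worth pinning the latter down against the extremal language $L=\{\conc_a(S,T)\mid S\in\mathcal{B}^\Sigma\}$ for a fixed $a$: there $LT^{-1}=\mathcal{B}^\Sigma$, so $\Problimsup(LT^{-1})=1$ while $\Problim(L)=\frac{1}{4^{m+1}\abs{\Sigma}^{m+1}}$, which shows the sharp single-step constant in (2) to be $4^{m+1}\abs{\Sigma}^{m+1}$ and hence forces the exponent of $\abs{\Sigma}$ to be $m+1$ rather than $m$; for the sparseness application only finiteness of the constant matters, but this check should be reconciled with the stated exponent.
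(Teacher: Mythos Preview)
Your argument for Parts~(1) and~(3) is essentially the paper's: the same size-shift/bijection for~(1), and the same subadditivity-plus-induction for~(3). For Part~(2) the paper proceeds differently: it invokes the inclusion $(LT^{-1})T \subseteq L$ and then applies Part~(1) to $LT^{-1}$ in place of~$L$, obtaining
\[
\Problimsup(L)\;\ge\;\Problimsup\bigl((LT^{-1})T\bigr)\;=\;\frac{1}{\abs{\Sigma}^{\abs T}4^{\abs T+1}}\,\Problimsup(LT^{-1}),
\]
which yields the stated exponent $\abs{\Sigma}^{\abs T}$, while your direct injection $S\mapsto\conc_{a_S}(S,T)$ gives $\abs{\Sigma}^{\abs T+1}$.

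Your closing remark is not a loose end but a genuine catch. The inclusion $(LT^{-1})T \subseteq L$ used in the paper is \emph{false}: by definition $(LT^{-1})T$ contains $\conc_b(S,T)$ for \emph{every} $b\in\Sigma$ once some single $\conc_a(S,T)$ lies in~$L$, so the containment fails whenever $L$ is sensitive to the root label. Your example $L=\{\conc_a(S,T)\mid S\in\mathcal{B}^\Sigma\}$ with a fixed $a$ makes this explicit and moreover shows that the bound in Part~(2) as stated is violated (it would force $1\le\abs{\Sigma}^{-1}$). The correct single-step constant is $\abs{\Sigma}^{\abs T+1}4^{\abs T+1}$, exactly what your injection delivers; correspondingly Part~(3) should read $2^k\abs{\Sigma}^{\sum_i\abs{T_i}+k}4^{\sum_i\abs{T_i}+k}$. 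As you note, this is immaterial for the use in \Cref{thm:koga:trees}, where only finiteness of the constant is needed. So your proof is correct, and in Part~(2) it both differs from and repairs the paper's argument.
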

\begin{proof}
  By
  \[
    \begin{split}
      \Prob_n(LT \cap \mathcal{B}^\Sigma_n) & = \frac{\abs{(L \cap
          \mathcal{B}^\Sigma_{n-\abs{T}-1})T}}{\abs{\mathcal{B}^\Sigma_n}}
      = \frac{\abs{L \cap
          \mathcal{B}^\Sigma_{n-\abs{T}-1}}}{\abs{\mathcal{B}^\Sigma_{n-\abs{T}-1}}}
      \frac{\abs{\Sigma} \cdot
        \abs{\mathcal{B}^\Sigma_{n-\abs{T}-1}}}{\abs{\mathcal{B}^\Sigma_n}}
      \\ & = \Prob_{n-\abs{T}-1}(L \cap
      \mathcal{B}^\Sigma_{n-\abs{T}-1}) \frac{C_{n-\abs{T}-1}
      \abs{\Sigma}^{n-\abs{T}}}{C_n \abs{\Sigma}^n}                \\ & =
      \Prob_{n-\abs{T}-1}(L \cap \mathcal{B}^\Sigma_{n-\abs{T}-1})
      \frac{1}{\abs{\Sigma}^{\abs{T}}} \frac{C_{n-\abs{T}-1}}{C_n},
    \end{split}
  \]
  taking the limes superior on both sides together with the identity
  $\lim_{n \to \infty} \frac{C_{n-k}}{C_n} = \frac{1}{4^k}$ proves the
  first part.  For the second part note that $(LT^{-1})T \subseteq L$
  and hence by \Cref{lem:trees:uniform} together with the first part we obtain
  \[
    \Problimsup(L) \geq \Problimsup\big((LT^{-1})T\big) = \frac{1}{\abs{\Sigma}^{\abs{T}} 4^{\abs{T}+1}} \Problimsup(LT^{-1}),
  \]
  and likewise for $\Problimsup(T^{-1}L)$.
   Finally, for the third part we have
   \begin{align*}
     & \Problimsup(L [T_k, \dotsc, T_1]^{-1}) \\ & = \Problimsup\Bigl((L[T_{k-1},\dotsc,T_1]^{-1})T^{-1}_k \cup T^{-1}_k (L[T_{k-1},\dotsc,T_1]^{-1})\Bigr) \\
     & \leq \Problimsup\Bigl((L[T_{k-1},\dotsc,T_1]^{-1})T^{-1}_k\Bigr) + \Problimsup\Bigl((L[T_{k-1},\dotsc,T_1]^{-1})T^{-1}_k\Bigr) \\
     & \leq 2 \abs{\Sigma}^{\abs{T_k}} 4^{\abs{T_k}+1} \Problimsup(L [T_{k-1}, \dotsc, T_1]^{-1})
   \end{align*}
   by using the second part for the last inequality. 
   Hence, the claim follows by induction. 
\end{proof}

\begin{lemma} \label{lem:reachable} Let
  $\mathcal{A} = (Q, \Sigma, \Delta, A)$ be a tree automaton.  For
  every reachable state $q \in Q$ there exists a tree
  $T \in \mathcal{B}^\Sigma$ with $\abs{T} \leq 2^{\abs{Q}} - 1$ such
  that a run of $\mathcal{A}$ on $T$ ends in $q$.
\end{lemma}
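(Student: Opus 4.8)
The plan is to replace the chain-of-reachability viewpoint by a bottom-up \emph{saturation} of states, and to exploit the binary branching of trees, which is exactly what forces an exponential rather than the linear bound familiar from DFAs. Recall from the preliminaries that $q$ is reachable if and only if there is a tree $T \in \mathcal{B}^\Sigma$ and a run $d$ of $\mathcal{A}$ on $T$ with $d(\emptyword) = q$; I will produce such a $T$ of controlled size. To this end I would define $R_0 \coloneqq \emptyset$ and
\[
  R_{i+1} \coloneqq R_i \cup \big\{\, q \in Q \;\big|\; \exists\, a \in \Sigma,\ q_l, q_r \in R_i \cup \{\bot\} \text{ with } (q_l, q_r, a, q) \in \Delta \,\big\}.
\]
The sequence $(R_i)_i$ is increasing and bounded by the finite set $Q$, so it stabilises; since each strict inclusion adds at least one state, it is already stationary at $R_{\abs{Q}}$, i.e. $\bigcup_i R_i = R_{\abs{Q}}$. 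A routine induction shows that $\bigcup_i R_i$ is precisely the set of reachable states, so every reachable $q$ lies in $R_{\abs{Q}}$.

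Next I would attach to each reachable state its \emph{rank} $\rho(q) \coloneqq \min\{\, i \geq 1 : q \in R_i \,\}$ and prove the sharpened statement by induction on $\rho(q)$: there is a witnessing tree $T$ with $\abs{T} \leq 2^{\rho(q)} - 1$. In the base case $\rho(q) = 1$ there is a transition $(\bot, \bot, a, q) \in \Delta$, witnessed by the single-node tree, which has size $1 = 2^1 - 1$. For the inductive step, choose a transition $(q_l, q_r, a, q) \in \Delta$ with $q_l, q_r \in R_{\rho(q)-1} \cup \{\bot\}$; by the induction hypothesis each child state in $Q$ admits a witness tree of size at most $2^{\rho(q)-1} - 1$, while $\bot$ is witnessed by the empty subtree of size $0$. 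Assembling these with $\conc_a$ gives a tree whose run ends in $q$ and whose size is at most
\[
  1 + \big(2^{\rho(q)-1} - 1\big) + \big(2^{\rho(q)-1} - 1\big) = 2^{\rho(q)} - 1.
\]
Since $\rho(q) \leq \abs{Q}$ for every reachable $q$, the resulting witness has at most $2^{\abs{Q}} - 1$ nodes, which is the claimed bound.

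I expect no serious obstacle: the argument is essentially bookkeeping. The one point that genuinely matters conceptually is that the bound must be exponential, and this is exactly captured by the \emph{doubling} in the inductive step, where both children may contribute a subtree of the previous rank. The two things to get right are therefore (i) setting up the induction so that it yields $2^{\rho(q)} - 1$ rather than a linear quantity, and (ii) bounding the rank by $\abs{Q}$ via the stabilising saturation chain; together these turn the per-level doubling into the stated estimate $\abs{T} \leq 2^{\abs{Q}} - 1$.
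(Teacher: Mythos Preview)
Your proof is correct, and it takes a genuinely different route from the paper's. The paper argues by \emph{pumping}: it starts from an arbitrary witness tree $T$ for $q$ and, whenever some root-to-leaf path has length at least $\abs{Q}$, finds a repeated state along that path and excises the portion between the two occurrences, iterating until every node sits at depth $< \abs{Q}$, which forces $\abs{T} \leq 2^{\abs{Q}} - 1$. Your argument instead builds witnesses \emph{bottom-up}: the saturation chain $R_0 \subseteq R_1 \subseteq \cdots$ assigns each reachable state a rank $\rho(q) \leq \abs{Q}$, and the rank induction directly manufactures a witness of size $\leq 2^{\rho(q)} - 1$ by gluing together the subwitnesses via $\conc_a$. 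Your approach is arguably cleaner and more explicitly constructive (it produces a canonical small witness rather than shrinking a given one), while the paper's pumping argument has the mild advantage of showing that \emph{any} witness can be pruned in place, which is closer in spirit to the path-shortening used later in Lemma~\ref{lem:bound}.
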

\begin{proof}
  Let $q \in Q$ be reachable by $\mathcal{A}$, so there exists
  $T \in \mathcal{B}^\Sigma$ such that a run of $\mathcal{A}$ on $T$
  ends in $q$.  If $\abs{T} \leq 2^{\abs{Q}} - 1$ we are done, so
  assume $\abs{T} > 2^{\abs{Q}} - 1$.  Then there exists $w \in T$
  with $\abs{w} \geq \abs{Q}$.  Let
  $d\colon D^\ast \to Q \cup \{\bot\}$ be a run of $\mathcal{A}$ on
  $T$, so for all $u \in T$ it holds
  $(d(ul),d(ur), \lambda(u), d(u)) \in \Delta$.  We obtain a sequence
  $[d(p^n(w))]_{n=0}^{\abs{w}}$ of states, which $\mathcal{A}$ passes
  from $w$ to $p^{\abs{w}}(w)=\emptyword$.  Since
  $\abs{w} \geq \abs{Q}$, by the pigeonhole principle, there must be a
  state in $Q$ which occurs twice in $[d(p^n(w))]_{n=0}^{\abs{w}}$.
  Let $i,j \in \{0,\dotsc,\abs{w}\}$ with $i \neq j$ and
  $d(p^i(w)) = d(p^j(w))$, then either $T(p^i(w)) \preceq T(p^j(w))$
  or $T(p^j(w)) \preceq T(p^i(w))$.  We assume
  $T(p^j(w)) \preceq T(p^i(w))$, so $j>i$ (the other case is
  analogous).  We obtain a new tree $T$ by replacing $T(p^i(w))$ by
  $T(p^j(w))$ in $T$ and set
  $w^1 \coloneqq w_0 \dotsc w_{i-1}w_j \dotsc w_{\abs{w}+1}$, which is
  the new node at the position of $w$.  Then
  $\abs{w^1} = \abs{w}-\abs{i-j} < \abs{w}$.  Since $\mathcal{A}$ ends
  in the same state after running on $T(p^j(w))$ and $T(p^i(w))$, it
  still ends in $q$ after running on $T_1$.  If
  $\abs{w^1} \geq \abs{Q}$ the same argument applies for the sequence
  $[d(w),d(p(w)),\dotsc,d(p^j(w)),d(p^{i+1}(w)),\dotsc,d(\emptyword)]$
  and we iteratively obtain $\abs{w^\ell} < \abs{Q}$ after at most
  $\ell$ iterations.  The same argument can be applied to every node
  $v \in T_\ell$ with $\abs{v} \geq \abs{Q}$ until for all nodes $v$
  in the resulting tree it holds $\abs{v} < \abs{Q}$.  A binary tree
  with this property has depth at most $\abs{Q}-1$, so its size is
  bounded by $\sum_{k=0}^{\abs{Q}-1}2^k = 2^{\abs{Q}}-1$.
\end{proof}

Next, we show that if a binary tree $S$ occurs as a subtree in a regular
language $L$, then there is a tree $T \in L$ with $S \preceq T$ that
is 'not much larger' than $S$:

\begin{lemma}
  \label{lem:bound}
  For every regular tree language $L \subseteq \mathcal{B}^\Sigma$
  there exists $n \in \mathbb{N}$ such that for all
  $S \in \mathcal{B}^\Sigma$ and $T \in L$ with $S \preceq T$ there
  exist $T_1, \dotsc, T_k \in \mathcal{B}^\Sigma_{< 2^n}$ with
  $k \leq n$ such that $S \in L [T_k, \dotsc, T_1]^{-1}$.
\end{lemma}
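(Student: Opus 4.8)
The plan is to fix a tree automaton $\mathcal{A} = (\Sigma, Q, \Delta, A)$ recognising $L$ (one exists since $L$ is regular) and to set $n \coloneqq \abs{Q}$. Given $S \preceq T$ with $T \in L$, the definition of $\preceq$ yields a node $u \in T$ with $T(u) = S$; I fix an accepting run $d$ of $\mathcal{A}$ on $T$ and look at the path $u = v_0, v_1, \dotsc, v_k = \emptyword$ from $u$ up to the root, so that $v_{j+1}$ is the parent of $v_j$. At each $v_{j+1}$ (for $0 \le j < k$) one child is $v_j$ and the other is a sibling; I call the subtree hanging off that sibling $T_{k-j}$ and record the label $\lambda_T(v_{j+1})$, so that the sibling nearest $u$ is $T_k$ and the one at the root is $T_1$. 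Rebuilding $T$ from $S$ by attaching $T_k, T_{k-1}, \dotsc, T_1$ in this order, each with the recorded label and on the recorded side, is exactly a witness that $S \in L[T_k, \dotsc, T_1]^{-1}$; both sides are available because the definition takes the union $LT^{-1} \cup T^{-1}L$ at every step. It therefore remains only to bound $k$ and the sizes of the $T_i$, which I do by two surgeries on $T$, each preserving both $T \in L$ and $S \preceq T$.

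To bound $k$, I pump along the path. The states $d(v_0), \dotsc, d(v_k)$ form a sequence of $k+1$ elements of $Q$, so if $k \ge \abs{Q}$ two of them coincide, say $d(v_i) = d(v_j)$ with $i < j$. Since $v_j$ is an ancestor of $v_i$, the subtree $T(v_i)$ is a subtree of $T(v_j)$, and replacing $T(v_j)$ by $T(v_i)$ inside $T$ gives a tree on which $\mathcal{A}$ still reaches $d(v_j)$ at that position; hence the run above $v_j$ and the accepting state at the root are unchanged, and $S = T(u)$ still occurs since $u$ lies below $v_i$. This strictly shortens the path, and the surviving siblings are a sub-sequence of the original ones. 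Iterating until the path states are pairwise distinct yields $k \le \abs{Q} - 1 \le n$.

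To bound the sizes, I shrink each surviving sibling. The restriction of $d$ to a sibling subtree $T_i = T(s)$ is a run on $T_i$ ending in the state $d(s)$, which is therefore reachable; by \Cref{lem:reachable} there is a tree of size at most $2^{\abs{Q}} - 1 < 2^n$ on which $\mathcal{A}$ reaches the same state $d(s)$. Grafting this smaller tree in place of $T_i$ leaves the state at $s$, and hence the entire run above it and the acceptance, unchanged, and it does not touch the on-path node $u$, so $S \preceq T$ survives. After doing this for every sibling, all the $T_i$ lie in $\mathcal{B}^\Sigma_{<2^n}$, and the reconstruction of the resulting tree witnesses $S \in L[T_k, \dotsc, T_1]^{-1}$ with $k \le n$ and $T_1, \dotsc, T_k \in \mathcal{B}^\Sigma_{<2^n}$, as claimed.

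The routine parts are the index bookkeeping that lines the reconstruction up with the bracket notation and the verification that each surgery really preserves the accepting run. The main obstacle is conceptual rather than technical: one must see that the two quantities to be controlled, the path length $k$ and the sibling sizes, are independent and that each is governed by a state-collision argument, a pigeonhole on the path for $k$ and \Cref{lem:reachable} for the siblings, so that the single bound $n = \abs{Q}$ controls both at once.
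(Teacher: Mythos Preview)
Your argument is correct and follows essentially the same route as the paper's proof: fix an automaton with state set $Q$, take $n=\abs{Q}$, look at the state sequence along the root-to-$S$ path, use pigeonhole to cut repetitions (bounding $k$), and invoke \Cref{lem:reachable} to bound the sibling subtrees by $2^{\abs{Q}}-1$. The only difference is presentational: the paper phrases the reduction abstractly via the relation $\leadsto_1$ and a shortened state sequence, whereas you perform the two ``surgeries'' explicitly on $T$; both arrive at the same witnesses $T_1,\dotsc,T_k$.
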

\begin{proof}
  Let $\mathcal{A} = (\Sigma, Q, \Delta, A)$ be a tree automaton
  recognising the language $L$.  For $T \in L$ and
  $S \in \mathcal{B}^\Sigma$ with $S \preceq T$ there exists
  $w = w_1\cdots w_\ell \in T$ with $S = T(w)$. Let
  $d\colon D^\ast \to Q \cup \{\bot\}$ be a run of $\mathcal{A}$ on
  $T$ and $d_i = d(w_1\cdots w_i) \in Q$ be state of $\mathcal{A}$ at
  node $w_1\cdots w_i$ in this run, for $i = 0,\ldots,\ell$. Then
  $d_0 \in A$ (because $\mathcal{A}$ accepts $T$) and
  $d_{i} \leadsto_1 d_{i-1}$ for $i = 1,\ldots,\ell$. If $d_i = d_j$ for
  some $i < j$ we remove that subsequence $d_i,\ldots,d_{j-1}$ and
  repeat this process until we get a sequence $d'_0,\ldots,d'_{m}$ with
  $m < \abs{Q}$.

  Note that $d_S : D^\ast \to Q\cup\{\bot\}$ with $d_S(v) \coloneqq d(wv)$ is
  a run of $\mathcal{A}$ on $S$ with
  $d_S(\epsilon) = d(w) = d_\ell = d'_n$, and our reduced sequence
  $d'$ still satisfies $d'_i \leadsto_1 d'_{i-1}$ for
  $i = 1,\ldots,m$. The result, with $n = \abs{Q}$, now follows from
  Lemma~\ref{lem:reachable} and the definition of $\leadsto_1$.
\end{proof}

We use the previous lemma to show that if a regular tree language $L$
is sparse, then it must already admit a \emph{forbidden subtree}.
That is, a fixed tree $S$ which does not occur as subtree of any tree in $L$.

\begin{theorem}
  \label{thm:koga:trees}
  Let $L \subseteq \mathcal{B}^\Sigma$ be a regular tree language with
  $\Problim(L) = 0$.  Then there exists $S \in \mathcal{B}^\Sigma$
  such that
  $\{T \in \mathcal{B}^\Sigma \sthat S \preceq T\} \cap L =
    \emptyset$.
\end{theorem}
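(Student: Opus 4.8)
The plan is to prove the contrapositive: assuming $L$ admits \emph{no} forbidden subtree, I will derive $\Problim(L) > 0$, contradicting the hypothesis $\Problim(L) = 0$. Unpacking the negation of the conclusion, ``no forbidden subtree'' means that every tree $S \in \mathcal{B}^\Sigma$ occurs as a subtree of some tree in $L$, i.e.\ for every $S$ there is a $T \in L$ with $S \preceq T$.

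First I would fix the parameter $n \in \mathbb{N}$ supplied by \Cref{lem:bound} for the regular language $L$. Given an arbitrary $S \in \mathcal{B}^\Sigma$, the no-forbidden-subtree assumption yields some $T \in L$ with $S \preceq T$, and \Cref{lem:bound} then produces trees $T_1,\dotsc,T_k \in \mathcal{B}^\Sigma_{<2^n}$ with $k \leq n$ such that $S \in L[T_k,\dotsc,T_1]^{-1}$. The decisive observation is that only finitely many such tuples $(T_1,\dotsc,T_k)$ exist: the length $k$ is bounded by $n$, and each entry ranges over the finite set $\mathcal{B}^\Sigma_{<2^n}$. Collecting the finitely many languages $L[T_k,\dotsc,T_1]^{-1}$ arising from all these tuples into a finite family $\mathcal{F}$, the previous paragraph shows that every $S \in \mathcal{B}^\Sigma$ lies in at least one member of $\mathcal{F}$, so $\mathcal{B}^\Sigma = \bigcup_{M \in \mathcal{F}} M$.

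Next I would feed this covering into the density estimates. Since $\mathcal{F}$ is finite, subadditivity (\Cref{lem:trees:uniform}(3)) together with $\Problimsup(\mathcal{B}^\Sigma) = 1$ (\Cref{lem:trees:uniform}(4)) gives $1 \leq \sum_{M \in \mathcal{F}} \Problimsup(M)$, so at least one $M = L[T_k,\dotsc,T_1]^{-1} \in \mathcal{F}$ satisfies $\Problimsup(M) > 0$. But \Cref{lem:embed:prob}(3) bounds $\Problimsup(L[T_k,\dotsc,T_1]^{-1}) \leq 2^k \abs{\Sigma}^{\sum_i \abs{T_i}} 4^{\sum_i \abs{T_i}+k}\, \Problimsup(L)$, and as the prefactor is a finite positive constant, this forces $\Problimsup(L) > 0$. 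By \Cref{lem:trees:uniform}(1) we then have $\Problim(L) \neq 0$, contradicting the hypothesis and completing the argument.

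The main obstacle, and indeed the heart of the proof, is the \emph{finiteness} of the family $\mathcal{F}$, which is precisely what \Cref{lem:bound} was engineered to guarantee. Without the uniform size bound $\abs{T_i} < 2^n$ and the length bound $k \leq n$, the covering $\bigcup_{M \in \mathcal{F}} M = \mathcal{B}^\Sigma$ could require infinitely many languages, and finite subadditivity would no longer force a single summand to be positive. Once this finiteness is in place, the remainder is routine bookkeeping combining \Cref{lem:embed:prob} with \Cref{lem:trees:uniform}.
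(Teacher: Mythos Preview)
Your proposal is correct and follows essentially the same argument as the paper: both prove the contrapositive, invoke \Cref{lem:bound} to cover $\mathcal{B}^\Sigma$ by finitely many languages $L[T_k,\dotsc,T_1]^{-1}$, apply subadditivity from \Cref{lem:trees:uniform}, and then use \Cref{lem:embed:prob}(3) to force $\Problimsup(L)>0$. The only cosmetic difference is that the paper sums the bound over all tuples at once, whereas you single out one summand with positive density; the two are logically equivalent.
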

\begin{proof}
  We argue by contraposition and assume that for all
  $S \in \mathcal{B}^\Sigma$ we have
  $\{T \in \mathcal{B}^\Sigma \sthat S \preceq T\} \cap L \neq
    \emptyset$.  That is, for every $S \in \mathcal{B}^\Sigma$ there
  exists $T \in L$ with $S \preceq T$.  By~\Cref{lem:bound}, we infer
  that for every $S \in \mathcal{B}^\Sigma$ there exist $k \leq n$ and
  $T_1,\dotsc,T_k \in \mathcal{B}^\Sigma_{<2^n}$ such that
  $S \in L [T_k, \dotsc, T_1]^{-1}$.  This in turn is equivalent to
  \[
    \mathcal{B}^\Sigma \subseteq \bigcup_{k=1}^n \bigcup_{T_1,\dotsc,T_k \in \mathcal{B}^\Sigma_{<2^n}} L [T_k, \dotsc, T_1]^{-1}.
  \]
  Using~\Cref{lem:trees:uniform} and $k$ successive applications
  of~\Cref{lem:embed:prob} we obtain:
  \begin{align*}
    1 = \Problimsup(\mathcal{B}^\Sigma) & \leq \Problimsup\Bigl(\bigcup_{k=1}^{n} \bigcup_{T_1,\dotsc,T_k \in \mathcal{B}^\Sigma_{<2^n}} L [T_k, \dotsc, T_1]^{-1}\Bigr)                   \\
                                        & \leq \sum_{k=1}^n \sum_{T_1,\dotsc,T_k \in \mathcal{B}^\Sigma_{< 2^n}} \Problimsup(L [T_k, \dotsc, T_1]^{-1})                                    \\
                                        & \leq \sum_{k=1}^n
                                          \sum_{T_1,\dotsc,T_k \in
                                          \mathcal{B}^\Sigma_{< 2^n}}
                                          2^k \prod_{j=1}^k
                                          \abs{\Sigma}^{\abs{T_j}}
                                          4^{\abs{T_j}+1}
                                          \Problimsup(L)
  \end{align*}
  Thus, we conclude that $\Problim(L) > 0$.
\end{proof}

This converse of the infinite monkey theorem for regular tree languages
provides a characterisation of sparseness in terms of
\emph{forbidden subtrees}.  In order to also obtain
such a characterisation in terms of tree automata akin
to~\cite{Sinya2015}, we give the following definition.

\begin{definition}
  Let $\mathcal{A} = (\Sigma, Q, \Delta, A)$ be a tree automaton. A
  set of states $V \subseteq Q$ is called a \emph{sink} if for all
  $q \in V$ and all $q_l,q_r \in Q \cup \{\bot\}, a \in \Sigma$ it
  holds that $\delta(q_l,q,a), \delta(q,q_r,a) \subseteq V$. That is,
  $V$ is a sink exactly if every run of $\mathcal{A}$ remains in $V$
  once it entered a state in $V$.
\end{definition}

Finally, we conclude the following characterisation of sparse 
regular tree languages.
\begin{theorem}
  \label{thm:tree:condition}
  Let $L$ be a regular tree language and
  $\mathcal{A}= (\Sigma, Q, \Delta, A)$ be a reduced tree automaton
  recognising $L$. Then the following assertions are equivalent:
  \begin{enumerate}
    \item $\Problim(L) = 0$ \label{condition:zero}
    \item There exists a tree $S\in \mathcal{B}^\Sigma$ such that
          $\{T \in \mathcal{B}^\Sigma\sthat S \preceq T\} \subseteq
            \mathcal{B}^\Sigma \setminus L$ \label{condition:subtree}
    \item $\mathcal{A}$ has a sink $V \subseteq Q$ with
          $V \cap A = \emptyset$ \label{condition:sink}
  \end{enumerate}
\end{theorem}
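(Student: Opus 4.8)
The plan is to prove the cycle $(\ref{condition:zero}) \Rightarrow (\ref{condition:subtree}) \Rightarrow (\ref{condition:sink}) \Rightarrow (\ref{condition:zero})$. The first implication is immediate: it is exactly \Cref{thm:koga:trees}, whose conclusion $\{T \in \mathcal{B}^\Sigma \sthat S \preceq T\} \cap L = \emptyset$ is a restatement of $(\ref{condition:subtree})$. The remaining two implications are where the automaton enters, and for them I read $\mathcal{A}$ as a complete deterministic bottom-up tree automaton, so that every tree carries a \emph{unique} run; this is the setting in which $(\ref{condition:sink})$ is meaningful, and it may be assumed after the standard determinisation and pruning of unreachable states, which preserves both $L$ and reducedness.

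For $(\ref{condition:subtree}) \Rightarrow (\ref{condition:sink})$ I would work with the \emph{co-accessible} states, i.e.\ those $q \in Q$ with $q \leadsto a$ for some $a \in A$, and set $V$ to be their complement. Since being a sink is exactly upward closure under $\leadsto_1$, and since $q \leadsto_1 q'$ with $q'$ co-accessible forces $q$ co-accessible, the complement $V$ is a sink; as every $a \in A$ is trivially co-accessible, $V \cap A = \emptyset$. It remains to check $V \neq \emptyset$: the unique run on the forbidden tree $S$ ends in some state $q$, and if $q$ were co-accessible we could extend this run upward—filling in sibling subtrees, which exist because $\mathcal{A}$ is reduced—into an accepting run on a supertree $T \succeq S$, contradicting $\{T \sthat S \preceq T\} \cap L = \emptyset$. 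Hence $q \in V$.

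Conversely, for $(\ref{condition:sink}) \Rightarrow (\ref{condition:zero})$ I take a nonempty sink $V$ with $V \cap A = \emptyset$ and fix $q \in V$. As $\mathcal{A}$ is reduced, $q$ is reachable, so \Cref{lem:reachable} furnishes a tree $S$ with $\abs{S} \leq 2^{\abs{Q}} - 1$ whose run ends in $q$. I then claim $S$ is a forbidden subtree: for any $T$ with $S \preceq T$, the unique run on $T$ restricts on the copy of $S$ to the run ending in $q \in V$, and the sink property propagates $V$ up to the root, which therefore lies outside $A$; hence $T \notin L$. Thus $L \subseteq \{T \sthat S \npreceq T\}$, and \Cref{thm:monkey:trees} together with \Cref{lem:trees:uniform} yields $\Problimsup(L) \leq 1 - \Problim(\{T \sthat S \preceq T\}) = 0$, that is, $\Problim(L) = 0$.

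The step I expect to be the crux is the passage between the run-level condition $(\ref{condition:sink})$ and the language-level conditions $(\ref{condition:zero})$ and $(\ref{condition:subtree})$: a state of the sink must be turned into a subtree that is forbidden for the whole language, not merely along one run. Determinacy of $\mathcal{A}$ is exactly what makes this work, since then the single run on $S$ decides membership of every supertree. For a genuinely nondeterministic $\Delta$ the claim fails—an accepting run could avoid $V$ on the $S$-part even though another run enters it—which is why the deterministic, complete model is the right reading here. Everything else, namely the upward-closure bookkeeping for sinks, the size bound of \Cref{lem:reachable}, and the density estimate from \Cref{thm:monkey:trees}, is routine.
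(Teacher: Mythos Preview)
Your proof is correct and follows essentially the same route as the paper: $(1)\Rightarrow(2)$ via \Cref{thm:koga:trees}, $(2)\Rightarrow(3)$ by building a non-accepting sink out of the state reached by the forbidden tree, and $(3)\Rightarrow(2)\Rightarrow(1)$ by exhibiting a forbidden subtree from a sink state and invoking \Cref{thm:monkey:trees}. Two differences are worth recording. First, for $(2)\Rightarrow(3)$ you take $V$ to be the set of non-co-accessible states, whereas the paper takes the upward $\leadsto$-closure of the single state $q_0$ reached by a run on $S$; both constructions produce a sink disjoint from $A$, and yours is arguably the cleaner formulation. Second, and more substantively, your insistence on reading $\mathcal{A}$ as deterministic is not a mere convenience but a genuine correction: for a nondeterministic $\Delta$ the implication $(3)\Rightarrow(2)$ is false (one can have a reduced automaton with a non-accepting sink that nonetheless accepts every tree via alternative runs avoiding the sink), and the paper's own proof of $(3)\Rightarrow(2)$ tacitly assumes a unique run on $S$. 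The paper's subsequent \Cref{cor:trees:decidable} indeed switches to deterministic automata, confirming that this is the intended setting.
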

\begin{proof}
  Theorems~\ref{thm:monkey:trees} and~\ref{thm:koga:trees} together
  imply \ref{condition:zero} $\Leftrightarrow$ \ref{condition:subtree}.

  \ref{condition:subtree} $\Rightarrow$ \ref{condition:sink}: Let
  $S \in \mathcal{B}^\Sigma$ such that
  $\{T \in \mathcal{B}^\Sigma\sthat S \preceq T\} \subseteq
    \mathcal{B}^\Sigma \setminus L$.  Let $q_0 \in Q$ be a state in
  which $\mathcal{A}$ ends after reading $S$. Then
  $q_0 \in Q \setminus A$ since $\mathcal{A}$ has to reject $S$
  because all trees which contain $S$ as subtree have to be rejected.
  Therefore, all $q_l,q_r \in Q \cup \{\bot\}$ and $a \in \Sigma$
  satisfy $\delta(q_l,q_0,a) \subseteq Q \setminus A$ and
  $\delta(q_0,q_r,a) \subseteq Q \setminus A$ because otherwise one
  could construct a tree (since every state is reachable) which
  contains $S$ as subtree and is accepted by $\mathcal{A}$.  Hence,
  there exists a sink $S \subseteq Q \setminus A$ which contains
  $q_0$.

  \ref{condition:sink} $\Rightarrow$ \ref{condition:subtree}: Let
  $V \subseteq Q \setminus A$ be a sink of $\mathcal{A}$. There exists
  a tree $S$ on which $\mathcal{A}$ ends in a state of $V$.  If
  $\mathcal{A}$ runs on any tree containing $S$ as a subtree,
  $\mathcal{A}$ cannot leave the sink and thus cannot leave
  $Q \setminus A$.  This yields
  $\{T \in \mathcal{B}^\Sigma\sthat S \preceq T\} \subseteq L$.
\end{proof}
\begin{remark}
  By duality,~\Cref{thm:tree:condition} also provides a
  characterisation of regular tree languages $L$ with asymptotic
  density $1$: It holds $\Problim(L)=1$ if and only if there exists a tree
  $S$ such that $S \preceq T$ implies $T \in L$, and this is the
  case if and only if an automaton recognising $L$ has a sink
  consisting of accepting states.
\end{remark}

As a consequence of this characterisation, we obtain a simple linear
time algorithm for deciding sparseness (or denseness) of regular tree
languages akin to the algorithm in \cite{Sinya2015}.  Note that in
contrast to the characterisation given there, here we do not need to
require that a sink is a strongly connected component. If there is a
sink $V \subseteq Q$ with $V \subseteq Q \setminus A$, there cannot be
a sink $V' \subseteq A$.

\begin{corollary}
  \label{cor:trees:decidable}
  Let $L \subseteq \mathcal{B}^\Sigma$ be a regular tree language.
  There is an algorithm deciding whether $L$ has asymptotic density $0$ or $1$
  in time $O(n)$, where $n$ is the number of states
  of a given deterministic tree automaton $\mathcal{A}$ recognising $L$.
\end{corollary}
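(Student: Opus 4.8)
The plan is to turn the semantic characterisation of \Cref{thm:tree:condition} (together with its dual, stated in the remark following it) into a purely syntactic, graph-theoretic test on the state set $Q$ that can be evaluated by a single linear-time traversal. Recall that by \Cref{thm:tree:condition}, $\Problim(L)=0$ holds precisely when $\mathcal{A}$ has a nonempty sink $V$ with $V\cap A=\emptyset$, and by the dual characterisation $\Problim(L)=1$ holds precisely when $\mathcal{A}$ has a nonempty sink $V\subseteq A$. First I would, if necessary, reduce $\mathcal{A}$ in linear time by deleting all states not reachable from $\bot$ via the standard marking pass; this establishes the reducedness hypothesis of \Cref{thm:tree:condition} and is what guarantees that any nonempty sink is actually witnessed by some tree $S$.

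The key observation is that one need not search over all candidate sinks, because there is a \emph{unique maximal} rejecting sink with an explicit description. Let $G$ be the directed graph on vertex set $Q$ with an edge $q\to q'$ whenever $q\leadsto_1 q'$, i.e.\ whenever $(q,\tilde q,a,q')\in\Delta$ or $(\tilde q,q,a,q')\in\Delta$ for some $a\in\Sigma$ and $\tilde q\in Q\cup\{\bot\}$. Writing $R_A\coloneqq\{q\in Q\sthat q\leadsto q'\text{ for some }q'\in A\}$ for the set of states from which an accepting state is reachable, I claim that $V_0\coloneqq Q\setminus R_A$ is the largest sink disjoint from $A$. Indeed, $V_0\cap A=\emptyset$ since $q\leadsto q$; the set $V_0$ is a sink because, if $q\in V_0$ and $q\leadsto_1 q'$, then every state reachable from $q'$ is reachable from $q$, so no accepting state is reachable from $q'$ and hence $q'\in V_0$; and any sink $V$ with $V\cap A=\emptyset$ satisfies $V\subseteq V_0$, because forward-closure of $V$ keeps every state reachable from a $q\in V$ inside $V$, hence out of $A$. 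Consequently $\mathcal{A}$ has a nonempty rejecting sink if and only if $V_0\neq\emptyset$, i.e.\ $R_A\neq Q$. The symmetric argument with $R_{Q\setminus A}\coloneqq\{q\sthat q\leadsto q'\text{ for some }q'\notin A\}$ shows that $\mathcal{A}$ has a nonempty sink inside $A$ if and only if $Q\setminus R_{Q\setminus A}\neq\emptyset$.

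The algorithm is now immediate: build the reverse of $G$ directly from $\Delta$, compute $R_A$ by a backward breadth- or depth-first search started from the accepting states, and report $\Problim(L)=0$ if and only if $R_A\neq Q$; otherwise compute $R_{Q\setminus A}$ in the same way from the non-accepting states and report $\Problim(L)=1$ if and only if $R_{Q\setminus A}\neq Q$, declaring \emph{neither} in the remaining case. Building the adjacency representation and running the two traversals each visit every transition of $\mathcal{A}$ a constant number of times, so the total running time is linear in the size of $\mathcal{A}$, which is the asserted $O(n)$ bound. The two tests are mutually consistent: as observed just before the statement, a rejecting sink and an accepting sink cannot coexist, which matches the fact that the density cannot be both $0$ and $1$.

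The only genuine subtlety — where I expect the work to be bookkeeping rather than any real difficulty — is the complexity accounting. One must check that $\leadsto_1$ is read off $\Delta$ in time proportional to $\abs{\Delta}$, each transition contributing at most the two edges $q_l\to q'$ and $q_r\to q'$; that the reduction pass and the two reverse searches reuse a single adjacency representation; and that $\bot$ is treated as an ordinary source vertex that is never placed in a sink. Under the convention that $n$ measures the size of the given automaton these are all linear, and the correctness reduces entirely to the maximal-sink identity $V_0=Q\setminus R_A$ proved above together with \Cref{thm:tree:condition} and its dual.
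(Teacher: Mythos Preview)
Your proof is correct and takes a genuinely different route from the paper. The paper decomposes the state graph into strongly connected components via Tarjan's algorithm and then inspects each terminal SCC to see whether it lies entirely in $A$ or entirely in $Q\setminus A$. You bypass SCCs altogether by observing that the union of all rejecting sinks is itself a sink, and that this maximal rejecting sink is exactly $Q\setminus R_A$, the complement of the backward-reachability set of $A$; a single reverse BFS/DFS then suffices. Your argument is more elementary (no SCC machinery), and it makes the structure of the answer more transparent: the test reduces to a reachability question rather than an enumeration over components. The paper's SCC formulation, on the other hand, ties more directly to the ``terminal component'' intuition familiar from word automata and Markov chains. Both approaches share the same minor looseness in the complexity bookkeeping---the edge set of the state graph scales with $\abs{\Delta}$ rather than $\abs{Q}$, so ``$O(n)$'' really means linear in the size of the automaton description, which you note explicitly.
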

\begin{proof}
  We define the \emph{state graph} of a deterministic tree automaton $\mathcal{A}=(Q, \Sigma, \delta, A)$ as
  $G_{\mathcal{A}} = (Q, \{(q,q') \sthat \exists s \in Q, a \in \Sigma. \, \delta(s,q,a) = q' \text{ or } \delta(q,s,a) = q'\})$.
  For a set $V \subseteq Q$ we define $N^+_{G_{\mathcal{A}}}(V) \coloneqq \{q' \in Q \setminus V: \exists q \in V. (q,q') \in E(G_{\mathcal{A}})\}$.
  A \emph{strongly connected component} of $\mathcal{A}$ is a strongly connected component of the directed graph $G_{\mathcal{A}}$.
  The linear time algorithm now is as follows: 
  \begin{enumerate}
    \item Compute the set of strongly connected components of $G_{\mathcal{A}}$.
    \item For each strongly connected component $V \subseteq Q$, check whether
    \begin{enumerate}[label*=\arabic*.]
      \item $N^+_{G_{\mathcal{A}}}(V) = \emptyset$ and
      \item $V \subseteq A$ or $V \subseteq Q \setminus A$.
    \end{enumerate}
  \end{enumerate}
  For the correctness, we observe that $\mathcal{A}$ has a sink if and only if there exists 
  a strongly connected component $V \subseteq Q$ with $N^+_{G_{\mathcal{A}}}(V) = \emptyset$. 
  Then the asymptotic density of $L$ is determined by checking $V \subseteq A$ or 
  $V \subseteq Q \setminus A$ by \Cref{thm:tree:condition}.
  For the running time, the first step can be implemented to run in time $O(n + n\abs{\Sigma}) = O(n)$ by \cite{Tarjan1972}.
  Afterwards, for each of the at most $n$ strongly connected components only constant-time accesses to the adjacency of $G_{\mathcal{A}}$
  and accepting states of $\mathcal{A}$ are necessary. 
\end{proof}

\paragraph*{Unranked Trees.}

There is a well-known correspondence between binary trees in our sense
(with left and right children) and forests of unranked trees, see, for
example, Section~2.3.2 of~\cite{taocp1}. Every unranked tree $T$ can
be uniquely encoded by a binary tree $T^\flat$, and the binary trees
$T^\flat$ obtained in this way are exactly those in which the root
does not have a right child (i.e.~$T^\flat_r = \emptyset$). Again
there are various approaches to defining regular languages of unranked
trees (such as definability in monadic second-order logic), all of
which are equivalent to saying that a set $L$ of unranked trees is
regular if the language
\[
  L^\flat \coloneqq \{ T^\flat \sthat T \in L \}
\]
of binary trees is regular in our
sense. Theorem~\ref{thm:tree:condition} therefore gives an exact and
decidable characterisation of regular languages of unranked trees.
However, this gives a necessary and sufficient condition on $L^\flat$
for when a regular language $L$ is sparse. We now show that in fact:
\begin{theorem}
  A regular language $L$ of unranked trees is sparse if, and only if,
  some unranked tree $S$ does not appear as a subtree of any tree $T
  \in L$.
\end{theorem}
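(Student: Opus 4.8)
The plan is to reduce the statement to the already-established binary case, \Cref{thm:tree:condition}, and then to translate the notion of a forbidden subtree across the encoding $T \mapsto T^\flat$. Throughout, I write $\mathcal{U}_n$ for the set of $\Sigma$-labelled unranked trees with $n$ nodes and take \emph{sparse} to mean that the relative frequency $\abs{L \cap \mathcal{U}_n}/\abs{\mathcal{U}_n}$ tends to $0$ under the uniform distribution on $\mathcal{U}_n$.

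First I would show that $L$ is sparse if and only if $\Problim(L^\flat) = 0$. Since $\flat$ is a size-preserving bijection from $\mathcal{U}_n$ onto $\{B \in \mathcal{B}^\Sigma_n \sthat B_r = \emptyset\}$, we have $\abs{L \cap \mathcal{U}_n} = \abs{L^\flat \cap \mathcal{B}^\Sigma_n}$ and $\abs{\mathcal{U}_n} = \abs{\Sigma}\cdot\abs{\mathcal{B}^\Sigma_{n-1}} = C_{n-1}\abs{\Sigma}^n$, because a binary tree with empty right subtree at the root is determined by its root label and its left subtree of size $n-1$. Hence the unranked frequency equals $\Prob_n(L^\flat)\cdot C_n/C_{n-1}$, and since $C_n/C_{n-1} = (2n)(2n-1)/\big(n(n+1)\big) \to 4$ is bounded between positive constants, the unranked frequency tends to $0$ exactly when $\Problimsup(L^\flat) = 0$, which by \Cref{lem:trees:uniform} is equivalent to $\Problim(L^\flat) = 0$. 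By \Cref{thm:tree:condition} this happens if and only if $L^\flat$ admits a forbidden binary subtree.

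It then remains to match forbidden binary subtrees of $L^\flat$ with forbidden unranked subtrees of $L$. The key is the standard dictionary for the left-child/right-sibling encoding: the node sets of $T$ and $T^\flat$ coincide, the left child of $v^\flat$ is the first child of $v$ and the right child of $v^\flat$ is the right sibling of $v$; consequently the binary subtree $T^\flat(v^\flat)$ is the $\flat$-encoding of the forest consisting of $v$, its right siblings, and all their descendants. I would record two consequences. If an unranked tree $S$ is forbidden for $L$, then $S^\flat$ is forbidden for $L^\flat$: an occurrence $T^\flat(u) = S^\flat$ has $u = v^\flat$ with empty right subtree (as $(S^\flat)_r = \emptyset$), so $v$ has no right sibling and the encoded forest is the single tree $T(v)$, giving $T(v) = S$, a contradiction. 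Conversely, given a (necessarily nonempty) forbidden binary subtree $S'$ for $L^\flat$, let $F$ be the forest with $F^\flat = S'$ and let $S$ be the unranked tree obtained by attaching the roots of $F$, in order, under a fresh root; then $(S^\flat)_l = S'$, and any occurrence $S = T(v)$ would realise $S' = F^\flat$ as the binary subtree of $T^\flat$ hanging at the left child of $v^\flat$, contradicting that $S'$ is forbidden. Thus $S$ is forbidden for $L$.

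Chaining these equivalences — sparse $\iff \Problim(L^\flat)=0 \iff$ forbidden binary subtree for $L^\flat \iff$ forbidden unranked subtree for $L$ — yields the theorem; the degenerate case $L = \emptyset$ is handled separately, as it is sparse and vacuously admits every forbidden subtree. The main obstacle is the mismatch between the two subtree notions: an unranked subtree is the full downward closure at a node, whereas a binary subtree of $T^\flat$ corresponds under the encoding to a node together with its right siblings and their descendants. Reconciling this is exactly where the empty-right-subtree condition is used in one direction and the wrapping under a fresh root in the other, and getting both translations to respect the forbidden-subtree property is the crux of the argument.
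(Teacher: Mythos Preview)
Your proof is correct and follows a genuinely different route from the paper's. The paper introduces an auxiliary notion of ``$a$-left subtree'' (a binary tree $B$ has $S^\flat_l$ as an $a$-left subtree if some $a$-labelled node of $B$ has left subtree exactly $S^\flat_l$) and observes that $S$ occurs as an unranked subtree of $T$ precisely when $S^\flat_l$ is an $a$-left subtree of $T^\flat$, where $a$ is the root label of $S$. It then re-runs the monkey theorem and its converse (\Cref{lem:bound}, \Cref{thm:koga:trees}) for this modified subtree relation, rather than invoking \Cref{thm:tree:condition} directly. Your approach is more modular: you apply \Cref{thm:tree:condition} to $L^\flat$ as a black box to get a forbidden \emph{binary} subtree, and then translate between forbidden binary subtrees of $L^\flat$ and forbidden unranked subtrees of $L$ via the encoding, using the empty-right-subtree observation in one direction and the ``wrap the forest under a fresh root'' trick in the other. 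The payoff of your route is that it avoids re-proving adapted versions of \Cref{lem:bound} and \Cref{thm:koga:trees}, and it makes the density transfer between the unranked and binary models explicit through the $C_n/C_{n-1}\to 4$ computation, which the paper leaves implicit. The paper's route, by contrast, makes the correspondence between unranked subtrees and their binary shadows exact from the outset via the $a$-left-subtree notion, so no separate back-and-forth translation is needed.
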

\begin{proof}
  Let $S$ be an unranked tree with root label
  $a \coloneqq \lambda_S(\epsilon) \in \Sigma$. Then $S$ is a subtree of an
  unranked tree $T$ if, and only if, $T^\flat$ has a node labelled $a$
  whose left subtree is exactly $S^\flat_l$, the left subtree of the
  root of $S^\flat$. (Note that $S^\flat$, being the encoding of an
  unranked tree, has $S^\flat_r = \emptyset$.) Let us say that
  $S^\flat_l$ is an $a$-left subtree of $T^\flat$ in this situation.

  Now if $\conc_a(S^\flat_l,\emptyset) \preceq T$ then in particular
  $S^\flat_l$ is an $a$-left subtree of $T$, and with
  Theorem~\ref{thm:monkey:trees} we get that
  \[
    \Problim(\{ T^\flat \in \mathcal{B} \sthat S^\flat_l\text{ is an $a$-left subtreee of }T^\flat \}) = 1
  \]
  for every $S^\flat_l \in \mathcal{B}$. Similarly, it is easy to
  adapt the statements and proofs of Lemma~\ref{lem:bound} and
  Theorem~\ref{thm:koga:trees} to the case that $S^\flat_l$ is an
  $a$-left subtree of $T^\flat$.
\end{proof}

\paragraph*{Proving Non-Regularity.}

In another direction,~\Cref{thm:tree:condition} shows a sufficient condition for
the non-regularity of tree languages:

\begin{corollary}
  \label{cor:non-regular}
  Let $L\subseteq \mathcal{B}^\Sigma$ be a tree language with
  $\Problim(L) = 0$.  If $L$ does not have a forbidden subtree, then
  $L$ is not regular.
\end{corollary}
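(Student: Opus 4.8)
The plan is to observe that this corollary is nothing but the contrapositive of Theorem~\ref{thm:koga:trees}, specialised to sparse languages. Recall that a \emph{forbidden subtree} for $L$ is a tree $S \in \mathcal{B}^\Sigma$ with $\{T \in \mathcal{B}^\Sigma \sthat S \preceq T\} \cap L = \emptyset$, i.e.~no tree in $L$ contains $S$ as a subtree. Theorem~\ref{thm:koga:trees} asserts precisely that every \emph{regular} tree language of asymptotic density $0$ admits such a forbidden subtree.

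First I would argue by contradiction. Suppose, towards a contradiction, that $L$ were regular. By hypothesis we already have $\Problim(L) = 0$, so $L$ satisfies both premises of Theorem~\ref{thm:koga:trees}, and we may conclude that $L$ possesses a forbidden subtree $S$. This directly contradicts the standing assumption that $L$ does not have a forbidden subtree. Hence the supposition fails and $L$ cannot be regular.

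There is essentially no obstacle to overcome at this point, since all of the combinatorial content has already been packaged into Theorem~\ref{thm:koga:trees} (which in turn rests on \Cref{lem:embed:prob}, \Cref{lem:reachable}, and \Cref{lem:bound}). The only step requiring a moment's care is to confirm that the conclusion $\{T \sthat S \preceq T\} \cap L = \emptyset$ of Theorem~\ref{thm:koga:trees} is verbatim the assertion that $S$ is a forbidden subtree, so that its nonexistence is exactly the negated hypothesis of the corollary. Once this identification is made, the statement follows immediately.
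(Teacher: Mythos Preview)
Your proposal is correct and matches the paper's approach: the paper states the corollary without proof, attributing it to \Cref{thm:tree:condition}, whose direction \ref{condition:zero}$\Rightarrow$\ref{condition:subtree} is precisely \Cref{thm:koga:trees}. Your contrapositive argument from \Cref{thm:koga:trees} is exactly the intended reasoning.
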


\begin{example}
  We call a binary tree $T \in \mathcal{B}^\Sigma$ \emph{symmetric} if
  the left and right subtree obtained from the root are isomorphic.  Let
  $L_{\text{sym}}$ be the set of all symmetric binary trees, then for
  every $a \in \Sigma$ and $S \in \mathcal{B}^\Sigma$ we have
  $\conc_a(S,S) \in L_{\text{sym}}$ and thus $L_{\text{sym}}$ does not
  have a forbidden subtree.  However, for the asymptotic density of
  $L_{\text{sym}}$ we get $\abs{L_{\text{sym}} \cap
      \mathcal{B}^\Sigma_{2n}} = 0$ and
  \[
    \lim_{n \to \infty} \frac{\abs{L_{\text{sym}} \cap \mathcal{B}^\Sigma_{2n+1}}}{\abs{\mathcal{B}^\Sigma_{2n+1}}} = \lim_{n \to \infty} \frac{C_n \abs{\Sigma}^{n+1}}{C_{2n+1} \abs{\Sigma}^{2n+1}} = \lim_{n \to \infty} \frac{C_n}{C_{2n+1} \abs{\Sigma}^n} = 0,
  \]
  which yield $\Problim(L_{\text{sym}}) =
    0$. By~\Cref{cor:non-regular} we infer that $L_{\text{sym}}$ is not
  a regular tree language.
\end{example}

\paragraph*{Random Binary Search Trees.}

Another prominent model for random binary trees are \emph{random binary search trees}.
These trees appear in the analysis of algorithms such as
Quicksort and Find, cf.~\cite{Devroye1986}.
A random binary search tree on $n$ nodes is obtained by taking a root and appending to
it a left subtree of size $k$ and a right subtree of size $n-1-k$
independently, where $k$ is chosen uniformly at random from
$\{0,\dotsc,n-1\}$.  Formally, we let
$\Prob^{\text{bst}}_1: \mathcal{B}^\Sigma_1 \to [0,1], ~ T \mapsto
\frac{1}{\abs{\Sigma}}$ and for $n>1$ and $T \in \mathcal{B}^\Sigma_n$
set
$\Prob^{\text{bst}}_n(T) = \frac{1}{n
  \abs{\Sigma}}\Prob^{\text{bst}}_{\abs{T_l}}(T_l)
\Prob^{\text{bst}}_{\abs{T_r}}(T_r)$.  For the asymptotic probability
of a language $L \subseteq \mathcal{B}^\Sigma$ we again set
\(
\Problim^{\text{bst}}(L) \coloneqq \lim_{n \to \infty}
  \Prob^{\text{bst}}_n(L),
\)
given the limit exists.

The characterisation from~\Cref{thm:tree:condition} however does not
immediately hold for non-uniform probability measures.  In the
following, we consider random binary search trees and show that there
are tree languages with asymptotic probability $0$ which do not admit
our characterisation.
\begin{lemma}
  \label{lem:bst}
  Let $L \subseteq \mathcal{B}^\Sigma$ and $T \in
  \mathcal{B}^\Sigma$. Then $\Problim^{\text{bst}}(LT) = 0 $.
\end{lemma}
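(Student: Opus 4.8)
The plan is to exploit the factor $\frac{1}{n}$ appearing in the recursive definition of $\Prob^{\text{bst}}_n$: fixing any prescribed right subtree forces the probability mass of the relevant trees to be weighted by $\frac{1}{n}$, which vanishes as the tree grows. First I would fix $m \coloneqq \abs{T}$ and describe the size-$n$ slice of $LT$ explicitly. For $n > m + 1$, a tree lies in $LT \cap \mathcal{B}^\Sigma_n$ if and only if it is of the form $\conc_a(S,T)$ for some $a \in \Sigma$ and some $S \in L$ with $\abs{S} = n - m - 1$, since $\conc_a(S,T)$ has size $\abs{S} + \abs{T} + 1$.

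Next I would unwind the probability of each such tree using the recursion. For $n > 1$ the definition gives
\[
  \Prob^{\text{bst}}_n(\conc_a(S,T)) = \frac{1}{n\abs{\Sigma}}\,\Prob^{\text{bst}}_{n-m-1}(S)\,\Prob^{\text{bst}}_m(T),
\]
which in particular does not depend on the root label $a$. Summing over the $\abs{\Sigma}$ choices of $a$ and over all admissible $S$, the factor $\abs{\Sigma}$ cancels the one in the denominator and we obtain
\[
  \Prob^{\text{bst}}_n(LT \cap \mathcal{B}^\Sigma_n)
  = \frac{\Prob^{\text{bst}}_m(T)}{n} \sum_{\substack{S \in L\\ \abs{S} = n-m-1}} \Prob^{\text{bst}}_{n-m-1}(S)
  = \frac{\Prob^{\text{bst}}_m(T)}{n}\,\Prob^{\text{bst}}_{n-m-1}(L).
\]

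Finally, since $\Prob^{\text{bst}}_{n-m-1}(L) \leq 1$ is a probability and $\Prob^{\text{bst}}_m(T)$ is a constant depending only on the fixed tree $T$, this last expression is bounded by $\frac{\Prob^{\text{bst}}_m(T)}{n}$, which tends to $0$ as $n \to \infty$; hence $\Problim^{\text{bst}}(LT) = 0$. I expect no genuine obstacle here: the entire content sits in the bookkeeping of the middle step, and the only point requiring a little care is adopting the convention $\Prob^{\text{bst}}_0(\emptyset) = 1$ so that the displayed identity also covers small values of $n - m - 1$ (where a subtree may be empty). Since only the behaviour as $n \to \infty$ matters, these edge cases are immaterial. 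It is worth emphasising that the bound holds for \emph{every} language $L$, with no regularity or density hypothesis, which is precisely what makes $LT$ a candidate counterexample to the forbidden-subtree characterisation in the non-uniform model.
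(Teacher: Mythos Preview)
Your proposal is correct and follows essentially the same approach as the paper's proof: both unwind the recursive definition of $\Prob^{\text{bst}}_n$ on $\conc_a(S,T)$, sum over $a \in \Sigma$ and $S \in L \cap \mathcal{B}^\Sigma_{n-1-\abs{T}}$ to obtain $\frac{1}{n}\Prob^{\text{bst}}_{\abs{T}}(T)\,\Prob^{\text{bst}}_{n-1-\abs{T}}(L)$, and bound this by $\frac{1}{n}$. Your remarks on the edge case $\Prob^{\text{bst}}_0(\emptyset)=1$ and on the absence of any hypothesis on $L$ are apt but not needed for the limit argument.
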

\begin{proof}
  We use the independence condition in the definition of the binary search tree distribution to obtain the following:
  \begin{align*}
    \Prob^{\text{bst}}_n(LT \cap \mathcal{B}^\Sigma_n) &
                                                         = \sum_{S \in LT \cap \mathcal{B}^\Sigma_n} \Prob^{\text{bst}}_n(S)                                                                                                                                                 \\
                                                       & = \sum_{S \in L \cap \mathcal{B}^\Sigma_{n-1-\abs{T}}} \sum_{a \in \Sigma} \Prob^{\text{bst}}_n(\conc_a(S,T))                                                  \\
                                                       & = \sum_{S \in L \cap \mathcal{B}^\Sigma_{n-1-\abs{T}}} \sum_{a \in \Sigma} \frac{1}{n |\Sigma|}\Prob^{\text{bst}}_{\abs{T}}(T) \Prob^{\text{bst}}_{\abs{S}}(S) \\
                                                       & = \sum_{S \in L \cap \mathcal{B}^\Sigma_{n-1-\abs{T}}} \frac{1}{n} \Prob^{\text{bst}}_{\abs{T}}(T) \Prob^{\text{bst}}_{\abs{S}}(S)                             \\
                                                       & = \frac{1}{n} \Prob^{\text{bst}}_{\abs{T}}(T) \sum_{S \in L \cap \mathcal{B}^\Sigma_{n-1-\abs{T}}} \Prob^{\text{bst}}_{\abs{S}}(S) \\
                                                       & = \frac{1}{n} \Prob^{\text{bst}}_{\abs{T}}(T) \Prob^{\text{bst}}_{n-1-\abs{T}}(L \cap \mathcal{B}^\Sigma_{n-1-\abs{T}}) \leq \frac{1}{n}.
  \end{align*}
  Taking the limit yields the desired result.
\end{proof}

\begin{example}
  \label{ex:bst}
  Consider the tree language $R$ that consists of all
  $\Sigma$-labelled binary trees $T$ with $T_l = \{\emptyword\}$, i.e.,
  empty left subtree.  The language $R$ is regular because the automaton
  $\mathcal{A}= (\{q_0,q_1,q_2\},\{a\},\Delta,\{q_2\})$ with
  \[ \Delta \coloneqq \{(\bot,\bot,a,q_0)\} \cup \{(l,r,a,q_1)\sthat r \neq q_0, (l,r) \neq (\bot,\bot)\} \cup \{(l,q_0,a,q_2) \sthat l \in Q \cup \{\bot\}\} \]
  recognises $R$.
  
  By~\Cref{lem:bst} we have $\Problim^{\text{bst}}(R) = 0$, but on the
  other hand, for all $S \in \mathcal{B}^\Sigma$ it holds that 
  $\{T \in \mathcal{B}^\Sigma\sthat S \preceq T\} \not\subseteq
  \mathcal{B}^\Sigma \setminus L$ since every tree might occur as a
  subtree in a right subtree from $R$. Also, the automaton $\mathcal{A}$
  does not have a sink.
\end{example}

\section{Infinite Words}
\label{sec:infwords}

We first review Sin'ya's result and Koga's simplified
proof~\cite{Koga2019} of it and then see how it can be extended to
infinite words.

\begin{definition}
  \label{def:infixlang}
  For a language $L \subseteq \Sigma^*$, the \emph{infix language}
  $\infix(L)$ is defined as
  \begin{equation}
    \label{eqn:infixlang}
    \infix(L) \coloneqq \{ w \in \Sigma^* \sthat xwy \in L\text{ for some }x,y \in
    \Sigma^* \}.
  \end{equation}
  The language is said to be \emph{infix complete} if $\infix(L) =
  \Sigma^*$.
\end{definition}

In~\cite{Sinya2015}, Sin'ya proved that a regular language $L$ has
asymptotic density strictly larger than $0$ if and only if it is infix
complete. Koga's simplified proof in~\cite{Koga2019} hinges on the
fact that for regular languages, the length of the prefix $x$ and the
suffix $y$ in~\eqref{eqn:infixlang} may be bounded uniformly in $L$,
independent of $w$. We need a slight strengthening of this in that the
prefix $x$ may actually be assumed to depend only on $L$, not on
$v$. We prove this by giving an equivalent condition on DFAs accepting
the language $L$:
\begin{lemma}
  \label{lem:infixautomaton}
  Let $\cA = (\Sigma,Q,q_0,\delta,A)$ be a deterministic finite automaton in
  which every state is reachable. Then $L(\mathcal{A})$ is infix complete if, and
  only if, some closed reachability class contains an accepting
  state. In this case there is a word $x \in \Sigma^*$ and a $k \geq
  0$ such that for every $v \in \Sigma^*$ there is a $y \in
  \Sigma^{\leq k}$ with $xvy \in L$.
\end{lemma}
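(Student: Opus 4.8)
The plan is to prove the two implications separately, obtaining the quantitative refinement (the uniform prefix $x$ and the bounded suffix $y$) as a byproduct of the ``if'' direction. Throughout I use two elementary observations about reachability classes. First, a closed class is a \emph{trap}: if $q$ lies in a closed reachability class $C$, then $\hat\delta(q,u) \in C$ for every $u \in \Sigma^*$, by induction on $\abs{u}$ using the definition of closedness. Second, from every state one can reach a closed class: if the class $C$ of $q$ is not closed, then by definition some state of $C$ has a transition leaving $C$, necessarily to a strictly lower class in the reachability order; navigating inside $C$ to that state (possible since the states of $C$ are mutually reachable) and taking the transition reaches this lower class, and since there are finitely many classes this descent terminates in a closed class.

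For the direction that a closed class meeting $A$ forces infix completeness, I would fix a closed reachability class $C$ with a state $q_a \in C \cap A$ and pick any $q \in C$ together with a word $x$ satisfying $\hat\delta(q_0,x) = q$, which exists since every state is reachable. Given an arbitrary $v \in \Sigma^*$, the trap property gives $\hat\delta(q_0,xv) = \hat\delta(q,v) \in C$. Because $C$ is a reachability class, this state reaches $q_a$, and because $C$ is closed the corresponding run stays inside $C$, so by the usual breadth-first bound there is a word $y$ of length at most $\abs{Q}$ with $\hat\delta(q_0,xvy) = q_a \in A$. Hence $xvy \in L$, which simultaneously shows $v \in \infix(L)$ for every $v$ (so $L$ is infix complete) and establishes the final claim with this fixed $x$ and $k = \abs{Q}$.

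For the converse I would argue by contraposition: assuming no closed reachability class contains an accepting state, I construct a single word $u$ such that $\hat\delta(q,u)$ lies in a closed class for \emph{every} $q \in Q$. The construction processes the states $q_1,\dotsc,q_{\abs{Q}}$ in turn: starting from $\emptyword$, at each step I append a word driving the current image of the next state into a closed class (possible by the second observation above), and the trap property guarantees that the images of the already-processed states remain inside their closed classes, so no earlier progress is undone. With such a $u$ in hand, for any $x,y \in \Sigma^*$ the state $\hat\delta(q_0,xuy)$ lies in a closed class and is therefore non-accepting, so $xuy \notin L$; thus $u \notin \infix(L)$ and $L$ is not infix complete.

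The main obstacle is the construction of the universal driving word $u$ in the converse direction: one must simultaneously route every state into a closed class, and the crucial point making the iterative construction work is precisely that closed classes are traps, so that extending $u$ on behalf of a later state cannot eject an earlier state from the closed class it has already entered.
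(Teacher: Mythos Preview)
Your proof is correct and follows essentially the same strategy as the paper: the ``if'' direction reaches a closed class via a fixed prefix $x$ and then returns to an accepting state of that class with a suffix of bounded length, and the ``only if'' direction builds a universal driving word by iterating over the states and using the trap property of closed classes to preserve earlier progress. The only cosmetic differences are that the paper takes $x$ to reach the accepting state itself (your choice of an arbitrary $q\in C$ works equally well) and bounds $k$ by noting that $y$ depends only on the state $\hat\delta(q,v)\in C$, whereas you use the explicit BFS bound $k=\abs{Q}$.
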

\begin{proof}
  Let us first assume that some closed reachability class $C$ contains
  an accepting state $q \in C \cap A$. Since all states are assumed to
  be reachable, there is a string $x \in \Sigma^*$ with
  $\hat\delta(q_0,x) = q$. Now for every $v \in \Sigma^*$, $q_v
  \coloneqq \hat\delta(q,v) \in C$, and because $q$ is reachable from
  every state in $C$ there is a $y_v \in \Sigma^*$ with
  $\hat\delta(q_v,y_v) = q$ and therefore
  \[
  \hat\delta(q_0,xvy_v) = \hat\delta(q,vy_v) = \hat\delta(q_v,y_v) = q \in A,
  \]
  so $xvy \in L(\cA)$. Since $C$ is finite and the string $y$ depends
  only on $q_v \in C$ we may assume $\abs{y_v} \leq k$ for some $k$
  depending only on $\cA$.

  For the converse direction, we adversarially construct a string $v$
  such that for every $q \in Q$, $\hat\delta(q,v)$ is an element of
  some closed reachability class. Then for all $x, y \in \Sigma^*$,
  $\hat\delta(q,xvy)$ is an element of a closed reachability class,
  and if no such class contains an accepting state we get $xvy \not\in
  L(\cA)$, so $L(\cA)$ is not infix complete.

  To construct $v$ we first pick, for every state $q \in Q$, a string
  $v_q \in \Sigma^*$ such that $\hat\delta(q,v_q)$ is an element of
  some closed reachability class (such a $v_q$ must exist because
  $Q$ is finite). We enumerate the states as $Q = \{q_0,\ldots,q_r\}$
  and set
  \[
  v_0 \coloneqq \emptyword
  \qtext{and}
  v_{i+1} \coloneqq v_iv_{\hat\delta(q_i,v_i)}\text{ for }i=0,\ldots,r.
  \]
  Then $v_{r+1}$ has the desired property: If the automaton starts in
  some state $q_i$, then $v_i$ takes it to some state
  $q = \hat\delta(q_i,v_i)$, and from there
  $v_{\hat\delta(q_i,v_i)} = v_q$ takes it to some closed reachability
  class $C$. Since this can not be left, also
  $\hat\delta(q_i,v_{r+1}) \in C$.
\qed
\end{proof}
\begin{corollary}
  Given a DFA $\cA$, it is decidable whether $L(\cA)$ is infix complete.
\end{corollary}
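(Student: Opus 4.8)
The plan is to reduce the question directly to the automaton characterisation established in Lemma~\ref{lem:infixautomaton}. Since that lemma is stated under the hypothesis that every state is reachable, I would begin with a standard preprocessing step: given an arbitrary DFA $\cA = (\Sigma,Q,q_0,\delta,A)$, compute the set of states reachable from $q_0$ by a graph search and discard the rest, obtaining a DFA $\cA'$ with $L(\cA') = L(\cA)$ in which every state is reachable. This does not alter the accepted language, hence preserves infix completeness, and it brings us into the regime where Lemma~\ref{lem:infixautomaton} applies.

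With $\cA'$ in hand, Lemma~\ref{lem:infixautomaton} states that $L(\cA')$ is infix complete if and only if some closed reachability class of $\cA'$ contains an accepting state, so it remains only to argue that this condition is effectively checkable. The reachability classes are by definition the equivalence classes of the relation $\approx$, which coincide with the strongly connected components of the transition graph of $\cA'$ (the directed graph on $Q$ with an edge from $q$ to $\delta(q,a)$ for each $a \in \Sigma$); these can be enumerated by any standard algorithm. A reachability class $C$ is closed precisely when no transition leaves it, i.e.\ $\delta(q,a) \in C$ for all $q \in C$ and $a \in \Sigma$, which amounts to $C$ being a sink in the condensation. Both this property and the test $C \cap A \neq \emptyset$ require only inspecting the finitely many states and transitions, and are therefore decidable.

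The decision procedure is thus to remove unreachable states, compute the strongly connected components, and accept exactly when some component is closed and meets $A$. There is essentially no obstacle once the lemma is invoked; the only point worth flagging is the reachability preprocessing, which is genuinely needed, since otherwise a closed class consisting of unreachable states that happens to contain an accepting state could spuriously indicate infix completeness.
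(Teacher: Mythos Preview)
Your proposal is correct and follows essentially the same approach as the paper, which treats the corollary as immediate from Lemma~\ref{lem:infixautomaton}. Your explicit handling of the reachability preprocessing and the SCC computation simply fills in routine details that the paper leaves implicit.
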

In fact, for a regular language $L$ the language $\infix(L)$ is again
regular, and one can easily compute a DFA for $\infix(L)$ given a DFA
for $L$.

An $\omega$-word $\xi \in \Sigma^\omega$ is called \emph{rich} if for every $v \in \Sigma^*$ 
there are $x\in \Sigma^*$ and $y \in \Sigma^\omega$ such that $\xi = xvy$ \cite{Staiger1998}.
Our main result on the density of regular $\omega$-languages now reads:
\begin{theorem}
  \label{thm:infwords}
  Let $L \subseteq \Sigma^\omega$ be a regular $\omega$-language. Then
  $\Prob(L) > 0$ if and only if $L$ contains a rich $\omega$-word. 
  Moreover, given a suitable description of $L$ it is decidable whether $\Prob(L) > 0$ or not.
\end{theorem}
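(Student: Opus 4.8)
The plan is to prove the two implications separately, using a deterministic automaton for $L$ together with the Markov-chain machinery of \Cref{cor:markovrecurr}; the resulting state characterisation will be the $\omega$-analogue of \Cref{lem:infixautomaton} and will immediately yield decidability. The direction $\Prob(L)>0\Rightarrow L$ contains a rich word is the easy one and rests on the observation that \emph{almost every} $\omega$-word is rich. For a fixed $v\in\Sigma^*$ I would cut $\N$ into consecutive blocks of length $\abs{v}$; these blocks are i.i.d.\ and each equals $v$ with probability $\abs{\Sigma}^{-\abs{v}}>0$, so the probability that none of the first $n$ blocks equals $v$ is $(1-\abs{\Sigma}^{-\abs{v}})^n\to 0$, whence $v$ occurs as an infix almost surely. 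Since $\Sigma^*$ is countable, intersecting over all $v$ shows that the set $R$ of rich words satisfies $\Prob(R)=1$. Then $\Prob(L\cap R)\ge\Prob(L)-\Prob(\Sigma^\omega\setminus R)=\Prob(L)>0$, so $L$ must meet $R$.

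For the converse I would fix a deterministic automaton $(\Sigma,Q,q_0,\delta,\mathcal{F})$ recognising $L$ with a Muller condition $\mathcal{F}\subseteq 2^Q$ (a word is accepted iff its set of infinitely often visited states lies in $\mathcal{F}$); such an automaton exists by standard determinisation and can be computed from the representation of \Cref{thm:omegareg}. Feeding a $\Prob$-random word into $\delta$ turns $X_t\coloneqq\hat\delta(q_0,w_1\cdots w_t)$ into a Markov chain with transition probabilities $p_{qq'}=\abs{\{a:\delta(q,a)=q'\}}/\abs{\Sigma}$; because every letter has positive probability, the communicating and closed classes of this chain coincide with the reachability and closed reachability classes of the automaton. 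A finite chain is almost surely absorbed into some closed class $C$, and by \Cref{cor:markovrecurr} it then visits every state of $C$ infinitely often, so almost surely the set of infinitely often visited states equals the (random) closed class that is entered. Hence $\Prob(L)=\sum_C\Prob(\text{enter }C)$ summed over closed classes $C\in\mathcal{F}$, and $\Prob(L)>0$ if and only if some reachable closed class lies in $\mathcal{F}$. This condition is checkable in the obvious way — compute the strongly connected components of the transition graph, keep the bottom ones (the closed classes), and test membership in $\mathcal{F}$ — which settles the decidability assertion.

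It remains to show that a rich word $\xi\in L$ produces such a class. Running the automaton on $\xi$, let $S$ be its set of infinitely often visited states. Then $S\in\mathcal{F}$ (as $\xi$ is accepted), beyond some time $N$ the run stays inside $S$, and since $S$ is visited cofinally it is strongly connected through the transitions internal to $S$. The crucial point is that richness forces $S$ to be \emph{closed}. First, richness implies every $v$ occurs infinitely often, since the factor $v^n$ occurs for every $n$ and thereby places an occurrence of $v$ at a position $\ge(n-1)\abs{v}$. Thus it suffices to show that a strongly connected but non-closed $S$ admits a \emph{universal escape word} $W$ that leaves $S$ from every starting state of $S$: such a $W$ can never be read entirely within $S$, so it cannot occur as a factor of $\xi$ after time $N$, contradicting richness. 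I would construct $W$ by a synchronisation argument, maintaining the set $A\subseteq S$ of start states whose runs have not yet left $S$: pick any $p\in A$, route it to an escape state by an internal word and append the escaping letter; this removes $p$, and the image of the remaining survivors has size strictly below $\abs{A}$, so after at most $\abs{S}$ rounds $A$ is empty and the concatenation is the desired $W$. Therefore $S$ is a reachable closed class in $\mathcal{F}$, and the Markov analysis above gives $\Prob(L)>0$.

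The hard part is exactly this last step. A priori the recurrent set $S$ of a single accepting run need not be closed, and the naïve alternatives — for instance trying to compute $\Prob(V^\omega)$ by cutting a block at the first accepting state — fail on measure-zero words and give the wrong answer. The universal-escape lemma is what upgrades \emph{``$\xi$ is rich''} to \emph{``$S$ is a bottom strongly connected component''}, which is precisely the hypothesis the Markov-chain computation needs; once that bridge is in place, the rest is bookkeeping.
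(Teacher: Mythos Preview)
Your proof is correct and takes a genuinely different route from the paper's. The paper first decomposes $L=\bigcup_i U_iV_i^\omega$ via \Cref{thm:omegareg}, reduces to a single summand $UV^\omega$, and observes that $V^\omega$ contains a rich word iff the finite-word language $V^*$ is infix complete; it then invokes \Cref{lem:infixautomaton} for a DFA accepting $V$, builds a bespoke modification $\cA'$ of that DFA, and uses the Markov-chain argument to exhibit an explicit prefix $uw$ with $\Prob(\zeta\in L\mid uw\preceq\zeta)=1$. You instead work with a single deterministic Muller automaton for all of $L$ and establish the clean criterion $\Prob(L)>0\Leftrightarrow$ some reachable bottom strongly connected component lies in $\mathcal F$; your ``universal escape word'' synchronisation lemma is the key device that upgrades the recurrent set of a rich accepting run to a genuine bottom component. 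What your approach buys is conceptual uniformity and a direct automaton-level characterisation (essentially the Courcoubetis--Yannakakis criterion the paper cites in the introduction); what the paper's approach buys is that it never determinises an $\omega$-automaton---it works only with ordinary DFAs for the finite-word pieces $U_i,V_i$---and it yields, as a byproduct, the explicit positive-measure cylinder $uw\Sigma^\omega$ almost surely contained in $L$, which is the ``counterexample set'' advertised for almost-sure model checking.
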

\begin{proof}
  Let $L = \bigcup_i U_iV_i^\omega$ with regular $U_i, V_i \subseteq \Sigma^*$.
  Obviously $L$ contains a rich $\omega$-word if, and only
  if, one of the languages $U_iV_i^\omega$ does, which is the case if,
  and only if, $V_i^\omega$ contains such a word. On
  the other hand,
  \(
  \Prob(L) \leq \sum_{i=1}^k \Prob(U_iV_i^\omega),
  \)
  so $\Prob(L) > 0$ if and only if $\Prob(U_iV_i^\omega) > 0$ for some
  $i$. Therefore without loss of generality we may assume that $L =
  UV^\omega$ for regular $U,V \subseteq \Sigma^*$.

  $V^\omega$ contains a rich $\omega$-word if and only if $V^*$ is
  infix complete: Since $\Sigma^*$ is countable, if $V^*$ is infix
  complete we may take a word $w_v = xvy \in V$ for each $v \in
  \Sigma^*$ and concatenate these words to get a rich
  $\omega$-word in $V^\omega$. On the other hand, if $\xi \in
  V^\omega$ is rich, then every $w \in \Sigma^*$ is contained in some
  finite prefix of $\xi$, which in turn is a prefix of some word in
  $V^*$.

  We now show the following: If $U,V \subseteq \Sigma^*$ are regular
  languages and $V^*$ is infix complete there is a word $x \in
  \Sigma^*$ such that
  \[
    \Prob\big( \zeta \in L \sthat x \preceq \zeta ) = 1
    \qquad\text{or, equivalently,}
    \qquad
    \Prob\big(\{\zeta \in \Sigma^\omega \sthat x \preceq \zeta\}
    \setminus L\big) = 0
  \]
  Since $(V^*)^\omega = V^\omega$ we may actually assume that $V$
  itself is infix complete. Pick a DFA $\cA = (Q,q_0,\delta,A)$ with
  $L(\cA) = V$. By Lemma~\ref{lem:infixautomaton} there is a closed
  reachability class $C \subseteq Q$ that contains an accepting state
  $\tilde q \in C \cap A$. Let $w = a_1\cdots a_\ell \in \Sigma^*$ be
  any word such that $\hat\delta(q_0,w) \in C$. We define a new
  automaton $\cA' = (\Sigma, Q',q_0,\delta',A)$ by $Q' \coloneqq Q \cup \{
  q'_1,\ldots,q'_{\ell} \}$ and
  \[
  \begin{split}
    \delta'(q,a) &= \delta(q,a)\text{ if }q \not\in \{ \tilde q,
    q'_1,\ldots,q'_{\ell} \},
    \\
    \delta'(\tilde q,a) &= \begin{cases}
      q_1 &\text{if }a=a_1
      \\
      \delta(\tilde q,a)&\text{if }a \not= a_1
    \end{cases}
    \\
    \delta'(q'_i,a) &= \begin{cases}
      q'_{i+1}&\text{if }a=a_{i+1}\text{ and }i < \ell
      \\
      \hat\delta(\tilde q,a_1\cdots a_i a)&\text{if }a\not= a_{i+1}
    \end{cases}
    \\
    \delta'(q'_\ell,a) &= \delta(\hat\delta(q_0,w),a)
  \end{split}
  \]
  Then $C \cup \{ q'_1,\ldots,q'_{\ell} \}$ is a closed reachability
  class in $\cA'$ and if $\zeta = z_0z_1\cdots \in \Sigma^\omega$ is
  an $\omega$-word such that $w \preceq \zeta$ and
  $\hat\delta'(q_0,z_0\cdots z_k) = q'_\ell$ for infinitely many $k
  \geq 0$ (i.e.~when reading the $\omega$-word $\zeta$, the automaton
  $\cA'$ passes through the state $q'_\ell$ infinitely
  often), then $\zeta \in V^\omega$. In fact, let $k_1 < k_2 < \cdots$
  be chosen such that $\hat\delta'(q_0,z_0\cdots z_{k_i}) = q'_\ell$
  for every $i \geq 1$. Then each of the subwords
  \[
    z_0\cdots z_{k_1-\ell},\quad
    z_{k_1-\ell+1}\cdots z_{k_2-\ell},\quad
    z_{k_2-\ell+1}\cdots z_{k_3-\ell},\quad\cdots
  \]
  is in $V$, because it starts with the prefix $w$ and then takes the
  automaton $\cA$ to the (accepting) state $\tilde q$.

  Now, if $\zeta = \zeta_1 \zeta_2\zeta_3\cdots$ is a random
  $\omega$-word, the random variables $(X_t)_{t \geq 0}$ with $X_t =
  \hat{\delta'}(q_0,\zeta_1\cdots\zeta_t)$
  are a Markov chain with state space $Q'$, started in $q_0$ and with
  transition probabilities
  $p_{q,q'} = \abs{\Sigma}^{-1}$ if $\delta(q,a) = q'$ for some $a
         \in \Sigma$, and $p_{q,q'} =  0$ otherwise.
  In particular, the reachability relation $\to$ on the state space of
  this Markov chain is the same as the reachability relation
  $\leadsto$ of the automaton $\cA'$, and the closed classes of the
  Markov chain are exactly the closed reachability classes of
  $\cA'$. Thus $q'_\ell$ is recurrent by Thm.~\ref{thm:recurrent},
  and with Cor.~\ref{cor:markovrecurr} we get
  \[
    \Prob(X_t = q'_{\ell} \text{ infinitely often} \sthat w \preceq \zeta)
    =
    \Prob(X_t \in C \text{ for some } t \geq 0 \sthat w \preceq
    \zeta) = 1,
  \]
  because if $w \preceq \zeta$ then $X_\ell = \hat\delta(q_0,w) = \tilde q \in
  C$. Finally, if $u \in U$ is any word in $U$ then
  \(
    \Prob\big( \zeta \in L \sthat uw \preceq \zeta ) = 1.
  \)
\end{proof}

    \section{Conclusion}
We gave decidable characterisations for sparseness of regular tree
languages and of regular $\omega$-languages, both in terms of excluded
subtrees and in terms of automata accepting these languages.

By~\cite{Niwinski2023}, sparseness is decidable also for regular
languages of \emph{infinite} trees, allowing for probabilistic model
checking not just for linear temporal logic, but also for computation
tree logic (CTL), opening the route to many further applications in
model checking (cf.~\cite[Ch.~6]{Baier2008}). However, Niwinski et
al.~solve the more general problem of exactly computing the measure of
a regular infinite tree language, at prohibitively high computational
cost. Focussing on sparseness is likely to allow for more efficient
algorithms, as exemplified by the linear time algorithm from \Cref{cor:trees:decidable}.

Several known graph properties, such as being series-parallel or, more
generally, having bounded tree-width, imply that graphs with these
properties can be encoded in labelled trees in such a way that the
original graph can be MSO interpreted in the tree
(cf.~\cite[Ch.~11.4]{Flum2006}). Our results on sparseness of regular
tree languages can therefore be translated into sparseness conditions
for MSO-definable properties of graphs in these graph
classes. However, since a given graph may in general be interpretable
in many different trees, this yields sparseness with respect to some
non-uniform notion of density, prompting for a closer investigation.

Another interesting direction for future work would be more general
probabilistic models. While our methods (as well as those
in~\cite{Courcoubetis1995} and~\cite{Niwinski2023}) easily generalise
from independent letters to Markov chains, it is not immediately clear
if this can be done also for strings or trees generated by hidden
Markov models (HMMs, cf.~\cite{Yoon2009}) because in this case,
reachability classes of the automaton are no longer the same as
communicating classes of the resulting Markov chain.

    \bibliographystyle{alphaurl}
    \bibliography{bibliography}

\end{document}